\newtheorem{theorem}{Theorem}[section]
\newtheorem{Proposition}[theorem]{Proposition}
\newcommand{\diff}{^\mathsf{\prime}}
\numberwithin{equation}{section}
\title{Direct and inverse scattering problems for a first-order system with energy-dependent potentials}
\author{T. Aktosun and R. Ercan\\
Department of Mathematics\\
University of Texas at Arlington\\
Arlington, TX 76019-0408, USA}
\date{}
\begin{document}

\maketitle

\begin{abstract}
The direct and inverse scattering problems on the full line are analyzed for a first-order system of ordinary linear differential equations associated with the derivative nonlinear Schr\"odinger equation and related equations. The system contains a spectral parameter and two potentials, where the potentials are proportional to the spectral parameter and hence are called energy-dependent potentials. Using the two
potentials as input, the direct problem is solved by determining the scattering coefficients and the bound-state
information consisting of bound-state energies, their multiplicities, and the corresponding
norming constants. By using two different methods, the corresponding inverse problem is solved
by determining the two potentials when the scattering data set is used as input.
The first method involves the transformation of the energy-dependent system into two distinct
energy-independent systems. The second method involves the establishment of the so-called
alternate Marchenko system of linear integral equations and the recovery of the
energy-dependent potentials from the solution to the alternate Marchenko system.
\end{abstract}

\section{Introduction}
\label{sec:section1}

In this paper we are interested in analyzing  the direct and inverse scattering problems for the first-order  system
\begin{equation}\label{1.1}
\frac{d}{dx}\begin{bmatrix}
\alpha\\
\noalign{\medskip}
\beta
\end{bmatrix}=
\begin{bmatrix}
-i\zeta^2 & \zeta q(x)\\
\noalign{\medskip}
\zeta r(x) & i\zeta^2
\end{bmatrix}
\begin{bmatrix}
\alpha\\
\noalign{\medskip}
\beta
\end{bmatrix},\qquad x\in\mathbb{R},
\end{equation}
where $\zeta $ is the spectral parameter, $x$ is the spatial coordinate, $\mathbb{R}$ denotes the real axis $(-\infty,+\infty),$ the scalar-valued potentials $q(x)$ and $r(x)$ belong to the Schwartz class,  and $\begin{bmatrix}
\alpha\\
\beta
\end{bmatrix}$ is the wavefunction depending on $\zeta$ and $x.$
Because of the appearance of the spectral parameter $\zeta$ in front of the potentials  $q(x)$ and $r(x)$ in \eqref{1.1}, the system \eqref{1.1} is referred to as an energy-dependent system

As indicated in Section~\ref{sec:section3}, the system \eqref{1.1} is associated with the
first-order AKNS system
\cite{ablowitz149inverse,ablowitz1974inverse,ablowitz1981solitons}
\begin{equation}\label{1.2}
\begin{bmatrix}
\xi\\
\noalign{\medskip}
\eta
\end{bmatrix} \diff=
\begin{bmatrix}
-i\lambda &  u(x)\\
\noalign{\medskip}
\ v(x) & i\lambda
\end{bmatrix}
\begin{bmatrix}
\xi\\
\noalign{\medskip}
\eta
\end{bmatrix}, \qquad x\in\mathbb{R},
\end{equation}
where the prime denotes the $x$-derivative, the spectral parameter $\lambda$ is related to $\zeta$ as
\begin{equation}\label{1.3}
\lambda=\zeta^{2},
\end{equation}
and the potentials $u(x)$ and $v(x)$ are related to $q(x)$ and $r(x)$ appearing in \eqref{1.1} as
\begin{equation}\label{ux}
u(x):=q(x)\,E^{-2},
\end{equation}
\begin{equation}\label{vx}
v(x):=\left(-\displaystyle\frac{i}{2}\,r\diff(x)+\displaystyle\frac{1}{4}\,
q(x)\,r(x)^2\right)E^2,
\end{equation}
with $E$ being the function of $x$ given by
\begin{equation}\label{E}
E:=\exp\left(\frac{i}{2}\int_{-\infty}^{x}dy\,q(y)\,r(y)\right).
\end{equation}
 For notational simplicity, we suppress the dependence of $E$ on $x.$

 As also indicated in Section~\ref{sec:section3}, the system \eqref{1.1} is associated with another energy-independent AKNS system, namely
 \begin{equation}\label{1.2b}
 \begin{bmatrix}
 \gamma\\
 \noalign{\medskip}
\epsilon
 \end{bmatrix}\diff =
 \begin{bmatrix}
 -i\lambda &  p(x)\\
 \noalign{\medskip}
 s(x) & i\lambda
 \end{bmatrix}
 \begin{bmatrix}
 \gamma\\
 \noalign{\medskip}
\epsilon
 \end{bmatrix}, \qquad x\in\mathbb{R},
 \end{equation}
 where the potentials $p(x)$ and $s(x)$ are related to $q(x)$ and $r(x)$ as
 \begin{equation}\label{p}
 p(x):=\left(\frac{i}{2}\,q\diff(x)+\frac{1}{4}\,q(x)^2\,r(x)\right)E^{-2},
 \end{equation}
 \begin{equation}\label{s}
 s(x):=r(x)\,E^2.
 \end{equation}
For contrast with \eqref{1.1}, we refer to \eqref{1.2} and \eqref{1.2b} as the energy-independent systems.

	The goal in the direct scattering problem for \eqref{1.1} is, given the potentials $q(x)$ and $r(x),$ to determine the properties of the so-called scattering coefficients associated with \eqref{1.1} and also to determine the so-called bound-state information related to \eqref{1.1}. The inverse problem for \eqref{1.1} is to determine the potentials $q(x)$ and $r(x)$ appearing in \eqref{1.1} in terms of the input data set containing the scattering coefficients and
the bound-state information. The direct and inverse scattering problems for the energy-independent systems  \eqref{1.2}
and \eqref{1.2b} are well understood \cite{ablowitz149inverse,ablowitz1974inverse,ablowitz1981solitons,novikov1984theory,shabat1972exact} when the potentials $u(x),$ $v(x),$ $p(x),$ and $s(x)$ belong to the Schwartz class.

Our goal in this paper is to solve the direct and inverse scattering problems for \eqref{1.1} with the help of the solutions to the direct and inverse scattering problems for \eqref{1.2} and \eqref{1.2b}. We accomplish our goal by establishing the appropriate transformations among certain particular solutions known as the Jost solutions,  the scattering coefficients, and the bound-state information for \eqref{1.1}, \eqref{1.2}, and \eqref{1.2b}. The bound states for these three systems are not necessarily simple, and the multiplicities of the bound states and the corresponding norming constants are treated without a restriction to simple bound states.

The research presented is based on the second author's doctoral thesis \cite{Ercan2018}
under the supervision of the
first author.
Our paper is organized as follows. In Section~\ref{sec:section2} we establish our notation and
provide the preliminaries needed for our analysis in later sections. We describe the scattering data sets
for the energy-independent systems \eqref{1.2} and \eqref{1.2b}, and this is done by paying particular attention
to the bound states and to their multiplicities and by expressing the corresponding bound-state information
in terms of certain triplets of constant matrices. We present a brief summary of the
solution to the corresponding inverse problems via the so-called Marchenko method.
In Section~\ref{sec:section3} the energy-dependent system \eqref{1.1} is transformed into the two energy-dependent
systems \eqref{1.2} and \eqref{1.2b} The transformations are presented relating the so-called
Jost solutions for \eqref{1.1} to the Jost solutions for \eqref{1.2} and for \eqref{1.2b}.
In Section~\ref{sec:section4} a brief description is presented for the determination of the scattering data set
for \eqref{1.1} when the potentials $q(x)$ and $r(x)$ appearing in \eqref{1.1} are used as input.
In Section~\ref{sec:section5} the inverse problem for \eqref{1.1} is solved by exploiting the relations
among the scattering data sets for \eqref{1.1}, \eqref{1.2}, and \eqref{1.2b}. In Section~\ref{sec:section6} the same inverse problem
is solved by using a method different from the one used in Section~\ref{sec:section5}. This is done by deriving
a system of uncoupled linear integral equations, which we call the alternate
Marchenko system. The scattering data set for \eqref{1.1} is used as input to the alternate Marchenko system,
and the potentials $q(x)$ and $r(x)$ are recovered from the solution to the alternate
Marchenko system.

When the potentials $q(x)$ and $r(x)$ appearing in
\eqref{1.1} depend on an additional parameter $t,$ which can be
interpreted as the time parameter, the system \eqref{1.1} is
replaced with
\begin{equation}\label{1.1c}
\frac{d}{dx}\begin{bmatrix}
\alpha\\
\noalign{\medskip}
\beta
\end{bmatrix}=
\begin{bmatrix}
-i\zeta^2 & \zeta q(x,t)\\
\noalign{\medskip}
\zeta r(x,t) & i\zeta^2
\end{bmatrix}
\begin{bmatrix}
\alpha\\
\noalign{\medskip}
\beta
\end{bmatrix},\qquad x\in\mathbb{R},\quad t>0.
\end{equation}
Via the inverse scattering transform, the system \eqref{1.1c} is related to the system of integrable evolution equations \cite{ablowitz149inverse,ablowitz1981solitons,tsuchida2010new}
\begin{equation}\label{1.2a}
\begin{cases}
iq_t+q_{xx}-i(qrq)_x=0,\\
\noalign{\medskip}
ir_t-r_{xx}-i(rqr)_x=0,
\end{cases}
\qquad   x\in\mathbb{R},\quad t>0,
\end{equation}
which is known as the derivative NLS (nonlinear Schr\"odinger) system. Note that the subscripts in \eqref{1.2a} denote the corresponding partial derivatives. The special case of \eqref{1.2a} when $r(x)=\pm q(x)^{\ast},$ where the asterisk denotes complex conjugation, corresponds to the derivative NLS equation \cite{ablowitz149inverse,ablowitz1981solitons,kaup1978exact}
\begin{equation}\label{1.2c}
iq_t+q_{xx}\pm i(q|q|^2)_x=0.
\end{equation}
The initial-value problem for \eqref{1.2c} was solved by Kaup and Newell \cite{kaup1978exact}
by using the method of inverse scattering transform.
Tsuchida developed \cite{tsuchida2010new}
a method to solve the initial-value problem for
\eqref{1.2a} by deriving an uncoupled system of two linear integral
equations resembling our alternate Marchenko system.

\section{Preliminaries}
\label{sec:section2}

In this section we establish our notation and provide the preliminaries needed in later sections. This is done by presenting a summary of the solution to the direct and inverse problems associated with the energy-independent systems \eqref{1.2} and \eqref{1.2b}, as these are needed to solve the corresponding  inverse problem for \eqref{1.1}.  We state the results mainly for \eqref{1.2}, as the results for \eqref{1.2b} can be stated in a similar way. A proof is omitted because it can be found in \cite{ablowitz1974inverse}.

\begin{theorem}
\label{thm:theorem2.1}
Assume that the potentials $u(x)$ and $v(x)$ appearing in \eqref{1.2} belong to the Schwartz class. We then have the following:

\begin{enumerate}

	 \item[\text{\rm(a)}] The system \eqref{1.2} has the so-called Jost solutions $\psi,$ $\phi,$ $\bar{ \psi},$ $\bar{ \phi}$ satisfying the  respective asymptotics
	\begin{equation}\label{2.1}
	\psi(\lambda,x)=\begin{bmatrix}
	0\\
\noalign{\medskip}
	\displaystyle e^{i\lambda x}
	\end{bmatrix} +o(1),\qquad  x\to+\infty, \quad \lambda\in\mathbb{R},
	\end{equation}
	\begin{equation}\label{2.2}
	\phi(\lambda,x)=\begin{bmatrix}
	e^{-i\lambda x}\\
\noalign{\medskip}
	0
	\end{bmatrix} +o(1),\qquad   x\to-\infty,\quad \lambda\in\mathbb{R},
	\end{equation}
	\begin{equation}\label{2.3}
	\bar{\psi}(\lambda,x)=\begin{bmatrix}
	e^{-i\lambda x}\\
\noalign{\medskip}
	0
	\end{bmatrix} +o(1),\qquad  x\to+\infty,\quad \lambda\in\mathbb{R},
	\end{equation}
	\begin{equation}\label{2.4}
	\bar{\phi}(\lambda,x)=\begin{bmatrix}
	0\\
\noalign{\medskip}
	e^{i\lambda x}
	\end{bmatrix} +o(1),\qquad  x\to-\infty, \quad \lambda\in\mathbb{R}.
	\end{equation}
	We remark that an overbar does not denote complex conjugation. We further remark that
our definition of the Jost solution $\bar{\phi}$ differs by a minus sign from
that used in \cite{ablowitz1974inverse}.

	\item[\text{\rm(b)}]  The six scattering coefficients $T,$ $R,$ $L,$ $\bar{T},$ $\bar{R},$ $\bar{L}$ associated with \eqref{1.2} are obtained from the large x-asymptotics of the Jost solutions as
	\begin{equation}\label{2.5}
	\psi(\lambda,x)=\begin{bmatrix}
	\displaystyle\frac{L(\lambda)}{T(\lambda)}\,e^{-i\lambda x}\\\noalign{\medskip}
	\displaystyle\frac{1}{T(\lambda)}\,e^{i\lambda x}
	\end{bmatrix} +o(1), \qquad   x\to-\infty,\quad \lambda\in\mathbb{R},
	\end{equation}
	\begin{equation}\label{2.6}
	\phi(\lambda,x)=\begin{bmatrix}
	\displaystyle\frac{1}{T(\lambda)}\,e^{-i\lambda x}\\\noalign{\medskip}
	\displaystyle\frac{R(\lambda)}{T(\lambda)}\,e^{i\lambda x}
	\end{bmatrix} +o(1), \qquad   x\to+\infty,\quad \lambda\in\mathbb{R},
	\end{equation}
	\begin{equation}\label{2.7}
	\bar{\phi}(\lambda,x)=\begin{bmatrix}
	\displaystyle\frac{\bar{R}(\lambda)}   {\bar{T}(\lambda)}\,e^{-i\lambda x}\\\noalign{\medskip}
	\displaystyle\frac{1}{\bar{T}(\lambda)}\,e^{i\lambda x}
	\end{bmatrix} +o(1), \qquad   x\to+\infty,\quad \lambda\in\mathbb{R},
	\end{equation}
	\begin{equation}\label{2.8}
	\bar{\psi}(\lambda,x)=\begin{bmatrix}
	\displaystyle\frac{1}{\bar{T}(\lambda)}\,e^{-i\lambda x}\\
\noalign{\medskip}
	\displaystyle\frac{\bar{L}(\lambda)}{\bar{T}(\lambda)}\,e^{i\lambda x}
	\end{bmatrix} +o(1), \qquad  x\to-\infty,\quad \lambda\in\mathbb{R}.
	\end{equation}
	We remark that $T$ and $\bar{T}$ are known as  the transmission coefficients, $R$ and $\bar{R}$ as the reflection coefficients from the right, and $L$ and $\bar{L}$ as the reflection coefficients from the left.

	\item[\text{\rm(c)}] The scattering coefficients are not all independent, and in fact we have
	\begin{equation}\label{2.9} L(\lambda)=-\displaystyle\frac{\bar{R}(\lambda)\,T(\lambda)}{\bar{T}(\lambda)}, \quad \bar{L}(\lambda)=-\displaystyle\frac{R(\lambda)\,\bar{T}(\lambda)}{T(\lambda)},\qquad \lambda\in\mathbb{R}.
	\end{equation}

	\item[\text{\rm(d)}]
	The transmission coefficient $T(\lambda)$ has a meromorphic extension  from the real $\lambda$-axis to the upper-half complex plane $\mathbb{C^+}.$ Similarly, the transmission coefficients $\bar{T}(\lambda)$ has a meromorphic extension from the real $\lambda$-axis to the lower-half complex plane $\mathbb{C^-}.$ Furthermore, we have
	\begin{equation}\label{T}
	T(\lambda)=1+O\left(\frac{1}{\lambda}\right),\qquad \lambda\to\infty \quad \text{\rm{in}}\quad \lambda\in\mathbb{\overline{C^+}},
	\end{equation}
	\begin{equation}\label{T1}
	\bar{T}(\lambda)=1+O\left(\frac{1}{\lambda}\right),\qquad \lambda\to\infty \quad \text{\rm{in}}\quad \lambda\in\mathbb{\overline{C^-}},
	\end{equation}
	where we have defined $\mathbb{\overline{C^{\pm}}}:=\mathbb{C^{\pm}}\cup\mathbb{R}.$
	\item[\text{\rm(e)}]  The bound states for \eqref{1.2}, i.e. column-vector solutions to \eqref{1.2} that are square integrable in $x\in\mathbb{R},$ occur at those $\lambda$-values at which $T(\lambda)$ has poles in $\mathbb{C^+}$ and at those $\lambda$-values at which $\bar{T}(\lambda)$ has poles in $\mathbb{C^-}.$ The number of bound states is finite, and each bound state is not necessarily simple but has a finite multiplicity. We use $\Big\{{\lambda}_j  \Big\}_{j=1}^N$ to denote the set of poles of $T(\lambda)$ in $\mathbb{C^+},$ and we use
$m_j$ to denote the multiplicity of
$\lambda_j,$ i.e. we assume
that the number of linearly-independent square-integrable column-vector solutions to \eqref{1.2} when $\lambda=\lambda_j$ is equal to $m_j.$ Thus, the nonnegative integer $N$ denotes the number of bound states associated with $T(\lambda)$ without counting multiplicities. In a similar way, we use $\Big\{{\bar{\lambda}}_j  \Big\}_{j=1}^{\bar{N}}$ to denote the set of poles of $\bar{T}(\lambda)$ in $\mathbb{C^-},$ and we assume that each $\bar{\lambda}_j$ has multiplicity $\bar{m}_j,$ i.e. the number of linearly-independent square-integrable
column-vector solutions to \eqref{1.2} when $\lambda=\bar{\lambda}_j$ is equal to $\bar{m}_j.$ Thus, the nonnegative integer $\bar{N}$ denotes the number of bound states associated with $\bar{T}(\lambda)$ without counting multiplicities.
\end{enumerate}
\end{theorem}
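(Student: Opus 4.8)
The plan is to prove (a)--(e) by the classical Volterra--Wronskian method for the Zakharov--Shabat/AKNS system, using that the coefficient matrix in \eqref{1.2} is trace-free and that $u,v$ are Schwartz, hence in $L^1(\mathbb{R})$. For (a), I would recast \eqref{1.2} as a family of Volterra integral equations obtained by variation of parameters against the potential-free flow $\mathrm{diag}(e^{-i\lambda x},e^{i\lambda x})$. For instance, the Jost solution $\psi$ is characterized by
\[
\psi(\lambda,x)=\begin{bmatrix}0\\ e^{i\lambda x}\end{bmatrix}-\int_x^{\infty}\mathrm{diag}\!\left(e^{-i\lambda(x-y)},\,e^{i\lambda(x-y)}\right)\begin{bmatrix}0&u(y)\\ v(y)&0\end{bmatrix}\psi(\lambda,y)\,dy ,
\]
while $\bar\psi$ satisfies the same equation with the two rows of the inhomogeneous term interchanged, and $\phi,\bar\phi$ satisfy the analogous equations with $\int_{-\infty}^{x}$. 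For $\lambda\in\mathbb{R}$ the exponential kernels have modulus one, so the successive-approximation (Neumann) series converges absolutely and uniformly on each half-line because $u,v\in L^1$; this yields existence, uniqueness, and the normalizations \eqref{2.1}--\eqref{2.4}. The same integral equations supply the analytic continuations needed later: on the ordered integration domain the oscillatory factors $e^{\pm i\lambda(\cdot)}$ have arguments of a fixed sign, so the iterated integrals stay bounded as $\lambda$ moves off $\mathbb{R}$, showing $\phi,\psi$ extend analytically to $\mathbb{C}^+$ and $\bar\phi,\bar\psi$ to $\mathbb{C}^-$.

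For (b), since the solution space of \eqref{1.2} is two-dimensional and each of $\{\psi,\bar\psi\}$ and $\{\phi,\bar\phi\}$ is generically a basis, every Jost solution is a constant-coefficient combination of the basis normalized at the opposite infinity. Reading off the coefficients from the free asymptotics produces \eqref{2.5}--\eqref{2.8}; this both shows the stated limits exist and \emph{defines} $T,R,L,\bar T,\bar R,\bar L$. For (c), I would use that for column solutions $f=(f_1,f_2)^{\!\top}$ and $g=(g_1,g_2)^{\!\top}$ the Wronskian $W[f,g]:=f_1g_2-f_2g_1$ is independent of $x$, because the coefficient matrix is trace-free. Evaluating one well-chosen Wronskian at both infinities then yields each identity in \eqref{2.9}: computing $W[\psi,\bar\phi]$ from \eqref{2.5}--\eqref{2.8} gives $L/T$ as $x\to-\infty$ and $-\bar R/\bar T$ as $x\to+\infty$, whence $L=-\bar R\,T/\bar T$, and symmetrically $W[\bar\psi,\phi]$ gives $\bar L=-R\,\bar T/T$. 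The minus signs are precisely where the modified sign convention for $\bar\phi$ noted after \eqref{2.4} enters.

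For (d), I would identify $1/T=W[\phi,\psi]$ and $1/\bar T=W[\bar\psi,\bar\phi]$, each read off from \eqref{2.5}--\eqref{2.8}. Since $\phi,\psi$ are analytic in $\mathbb{C}^+$, the Wronskian $1/T$ is analytic there, so $T$ is meromorphic in $\mathbb{C}^+$; likewise $\bar T$ is meromorphic in $\mathbb{C}^-$. The asymptotics \eqref{T}--\eqref{T1} follow by inserting the large-$\lambda$ expansion of the Neumann series into these Wronskians: the leading correction is the $O(1/\lambda)$ term obtained after one integration by parts using the smoothness of the Schwartz potentials, giving $T,\bar T\to1$.

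For (e), at a zero $\lambda_j\in\mathbb{C}^+$ of $1/T=W[\phi,\psi]$ the solutions $\phi$ and $\psi$ become linearly dependent, $\phi(\lambda_j,\cdot)=c_j\,\psi(\lambda_j,\cdot)$; since $\phi$ decays as $x\to-\infty$ and $\psi$ as $x\to+\infty$ when $\mathrm{Im}\,\lambda_j>0$, this common solution is square-integrable, so the poles of $T$ in $\mathbb{C}^+$ are exactly the bound states (and symmetrically for $\bar T$ in $\mathbb{C}^-$). Finiteness follows because $1/T$ is analytic and nonvanishing near $\infty$ (where $T\to1$) and, under the Schwartz hypothesis, continuous and nonvanishing on $\mathbb{R}$, so its zeros lie in a compact subset of $\mathbb{C}^+$ and, being isolated, are finite in number; the integers $N,\bar N$ then count these zeros without multiplicity. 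I expect the main obstacle to be the bookkeeping of the multiplicities $m_j$: one must show that the order of the zero of $1/T$ at $\lambda_j$ equals the recorded $m_j$, which requires differentiating the dependency relation $\phi=c\,\psi$ in $\lambda$ to generate the associated chain of (generalized) square-integrable solutions and verifying their linear independence. This non-simple case is exactly the delicate point the paper emphasizes, and it is where I would concentrate the effort.
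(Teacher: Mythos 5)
Your proposal reconstructs exactly the classical Volterra--Wronskian argument that the paper itself does not spell out: the paper's ``proof'' of Theorem~\ref{thm:theorem2.1} consists of the single remark that it can be found in \cite{ablowitz1974inverse}, and your route (variation-of-parameters integral equations with ordered kernels, Neumann-series convergence from $u,v\in L^1$, analytic continuation of $\phi,\psi$ to $\mathbb{C}^+$ and $\bar\phi,\bar\psi$ to $\mathbb{C}^-$, and the Wronskian identifications $1/T=[\phi;\psi]$, $1/\bar T=[\bar\psi;\bar\phi]$ together with $[\psi;\bar\phi]$ and $[\bar\psi;\phi]$ for \eqref{2.9}) is a faithful rendition of that standard proof; it is also consistent with the Wronskian relations the paper records later in \eqref{4.2}--\eqref{4.11}, including the sign conventions coming from the modified definition of $\bar\phi$.

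There is, however, one step in your treatment of part (e) that does not hold as stated: you claim that under the Schwartz hypothesis $1/T=[\phi;\psi]$ is automatically nonvanishing on $\mathbb{R}$, and you use this to confine the zeros to a compact subset of $\mathbb{C}^+$ and conclude finiteness. For the general system \eqref{1.2} with independent complex-valued $u$ and $v$, real zeros of the Wronskian (spectral singularities) can occur no matter how smooth and rapidly decaying the potentials are, and zeros in $\mathbb{C}^+$ may then accumulate on the real axis. Nonvanishing on $\mathbb{R}$ is automatic only under special symmetry reductions, e.g.\ $v=u^{\ast}$, where $|1/T|^2=1+|R/T|^2\ge 1$ on $\mathbb{R}$; in the generality of this paper it is a genericity assumption, implicit in the finiteness assertion of Theorem~\ref{thm:theorem2.1}(e) and in the AKNS literature the paper cites, and you should state it as a hypothesis rather than derive it. Separately, you correctly flag the multiplicity bookkeeping as the delicate point but leave it unexecuted; note that the paper handles $m_j$ and the associated generalized norming constants not through your proposed $\lambda$-differentiation of the dependency relation directly, but through the Jordan-chain triplets $(A,B,C)$ of \eqref{Aa}--\eqref{C4}, following (4.29) and (4.49) of \cite{busse2008generalized}, which is where the argument you sketch is actually carried out.
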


 For the system \eqref{1.2}, associated with each bound state at $\lambda=\lambda_j$ we have the bound-state norming constants $c_{jk},$ where a double index is used as a subscript with $j=1,...,N$ and $k=0,1,...,m_j-1.$ Similarly, associated with each bound state at $\lambda=\bar{\lambda}_j$ we have the bound-state norming constants $\bar{c}_{jk},$ where $j=1,...,\bar{N}$ and $k=0,1,...,\bar{m}_j-1.$ We refer the reader to $(4.29)$ and $(4.49)$ of \cite{busse2008generalized} for the definitions of  and elaborations for $c_{jk}$ and $\bar{c}_{jk},$ respectively.

We recall that the direct scattering problem for \eqref{1.2} consists of the determination of the set of scattering coefficients
\begin{equation}\label{2.10}
\Big\{T(\lambda), R(\lambda), L(\lambda), \bar{T}(\lambda), \bar{R}(\lambda), \bar{L}(\lambda)\Big\},
\end{equation}
as well as the bound-state information
\begin{equation}\label{2.11}
\Big\{{\lambda}_j,  \big\{c_{jk}\big\}_{k=0}^{m_j-1}\Big\}_{j=1}^N, \quad \Big\{\bar{\lambda}_j,  \big\{\bar{c}_{jk}\big\}_{k=0}^{\bar{m}_j-1}\Big\}_{j=1}^{\bar{N}},
\end{equation}
when the potentials $u(x)$ and $v(x)$ are given.

The solution to the direct scattering problem for \eqref{1.2} can be obtained as follows:

\begin{enumerate}
\item[\text{\rm(a)}] Using the potentials $u(x)$ and $v(x)$ appearing in \eqref{1.2}, when they belong to the Schwartz class,  we uniquely determine the corresponding Jost solutions $\psi,$ $\phi,$ $\bar{ \psi},$ $\bar{ \phi}$ satisfying the asymptotics \eqref{2.1}-\eqref{2.4}. The existence and uniqueness of the Jost solutions are established by converting \eqref{1.2} and \eqref{2.1}-\eqref{2.4} into the corresponding  integral equations for those Jost solutions and then by solving those integral equations iteratively by representing their solutions as uniformly convergent infinite series.

\item[\text{\rm(b)}] From the respective asymptotics \eqref{2.5}-\eqref{2.8} of the four constructed Jost solutions, we  recover the scattering coefficients appearing in the set described in \eqref{2.10}.

\item[\text{\rm(c)}] Since the transmission coefficient $T(\lambda)$ has a meromorphic extension from $\mathbb{R}$ to $\mathbb{C^+}$ and the transmission coefficient $\bar{T}(\lambda)$ has a meromorphic extension from $\mathbb{R}$ to $\mathbb{C^-}$  , by using the respective poles and multiplicities of those poles we determine $N,$ $\bar{N},$ $\Big\{{\lambda}_j  \Big\}_{j=1}^N$ ,  $\Big\{{\bar{\lambda}}_j  \Big\}_{j=1}^{\bar{N}}$ as well as the multiplicities $m_j$ and $\bar{m}_j.$

\item[\text{\rm(d)}] Having obtained the Jost solutions $\psi,$ $\phi,$ $\bar{ \psi},$ $\bar{ \phi}$ and the transmission coefficients $T(\lambda)$ and $\bar{T}(\lambda),$ we determine the norming constants $c_{jk}$ and $\bar{c}_{jk}$ appearing in \eqref{2.11} as described by  $(4.29)$ and $(4.49)$ of \cite{busse2008generalized}.
\end{enumerate}

Having outlined the solution to the direct scattering problem for \eqref{1.2}, let us turn our attention to the corresponding inverse scattering problem. The inverse problem for \eqref{1.2} consists of the determination of the potentials $u(x)$ and $v(x)$ in terms of the input data consisting of the corresponding sets appearing in \eqref{2.10} and \eqref{2.11}. As a result of \eqref{2.9}, we can omit the left reflection coefficients $L(\lambda)$ and $\bar{L}(\lambda)$ from the set in \eqref{2.10}. A solution to this inverse problem can be given as follows:

\begin{enumerate}
	\item[\text{\rm(a)}] Using the quantities appearing in \eqref{2.10} and \eqref{2.11}, we  form the  $2\times2$ matrix-valued function $F(y)$ given by
	\begin{equation}\label{2.12}
	F(y):=\begin{bmatrix}
	0&\bar{\Omega}(y)\\
\noalign{\medskip}
\Omega(y)&0
	\end{bmatrix},
	\end{equation}
	where \begin{equation}\label{2.13}
	\Omega(y):=\frac{1}{2\pi}\int_{-\infty}^{\infty}d\lambda\, R(\lambda)\,e^{i\lambda y}+C\,e^{-A y}\,B,
	\end{equation}
	\begin{equation}\label{2.14}
	\bar{\Omega}(y):=\frac{1}{2\pi}\int_{-\infty}^{\infty}d\lambda\,
\bar{R}(\lambda)\,e^{-i\lambda y}+\bar{C}\,e^{-\bar{A} y}\,\bar{B}.
	\end{equation}
	Here $R(\lambda)$ and $\bar{R}(\lambda)$ are the reflection coefficients from the right appearing in \eqref{2.10} and
	\begin{equation}\label{2.15a}
	C_j:=\begin{bmatrix}
	c_{j(m_j-1)}&c_{j(m_j-2)}&\cdots&c_{j1}&c_{j0}
	\end{bmatrix},
	\end{equation}
	\begin{equation}\label{2.16a}
	\bar{C}_j:=\begin{bmatrix}	\bar{c}_{j(\bar{m}_j-1)}&\bar{c}_{j(\bar{m}_j-2)}&\cdots&\bar{c}_{j1}&\bar{c}_{j0}
	\end{bmatrix},
	\end{equation}
	where we observe that $C_j$ is a row vector with $m_j$ components and $\bar{C}_j$ is a row vector with $\bar{m}_j$ components. The square matrices $A$ and $\bar{A}$  and the column vectors $B$ and $\bar{B}$ appearing in \eqref{2.13} and \eqref{2.14} are constructed as follow. We first form the square matrices $A_j$ and $\bar{A}_j,$ with respective sizes of $m_j \times m_j$ and $\bar{m}_j\times \bar{m}_j,$ as
	\begin{equation}\label{Aa}
	A_j:=\begin{bmatrix}
	-i\lambda_j&-1&0&\cdots&0&0\\
	0&-i\lambda_j&-1&\cdots&0&0\\
	0&0&-i\lambda_j&\cdots&0&0\\
	\vdots&\vdots&\vdots&\ddots&\vdots&\vdots\\
	0&0&0&\cdots&-i\lambda_j&-1\\
	0&0&0&\dots&0&-i\lambda_j
	\end{bmatrix},
	\end{equation}
	\begin{equation}\label{A1a}
	\bar{A}_j:=\begin{bmatrix}
	-i\bar{\lambda}_j&-1&0&\cdots&0&0\\
	0&-i\bar{\lambda}_j&-1&\cdots&0&0\\
	0&0&-i\bar{\lambda}_j&\cdots&0&0\\
	\vdots&\vdots&\vdots&\ddots&\vdots&\vdots\\
	0&0&0&\cdots&-i\bar{\lambda}_j&-1\\
	0&0&0&\cdots&0&-i\bar{\lambda}_j
	\end{bmatrix}.
	\end{equation}
	We note that $A_j$ and $\bar{A}_j$ are in  Jordan canonical forms. Next, we construct the column vector $B_j$ so that its first $(m_j-1)$ components are all zero and its last component, i.e. the $m_j$th component, is equal to one. Similarly, we construct the column vector $\bar{B}_j$ so that its first $(\bar{m}_j-1)$ components are all zero and its last component, i.e. the $\bar{m}_j$th component, is equal to one. Thus, we have
	\begin{equation}\label{2.21a}
	B_j:=\begin{bmatrix}
	0\\ \vdots \\
	0\\
	1
	\end{bmatrix},\qquad j=1,\dots,N,
	\end{equation}
	\begin{equation}\label{2.22a}
	\bar{B}_j:=\begin{bmatrix}
	0\\ \vdots \\
	0\\
	1
	\end{bmatrix},\qquad j=1,\dots,\bar{N}.
	\end{equation}
	The block matrices $A$ and $\bar{A},$ the block row vectors $C$ and $\bar{C},$ and the block column vectors $B$ and $\bar{B}$ are formed in terms of $A_j,$ $\bar{A}_j,$ $B_j,$ $\bar{B}_j,$ $C_j,$ $\bar{C}_j$ as
	\begin{equation}\label{A4}
	A:=\begin{bmatrix}
	A_1&0&\cdots&0&0\\
	0&A_2&\cdots&0&0\\
	\vdots&\vdots&\ddots&\vdots&\vdots\\
	0&0&\cdots&A_{N-1}&0\\
	0&0&\cdots&0&A_N
	\end{bmatrix},
	\end{equation}
	\begin{equation}\label{A5}
	\bar{A}:=\begin{bmatrix}
	\bar{A}_1&0&\cdots&0&0\\
	0&\bar{A}_2&\cdots&0&0\\
	\vdots&\vdots&\ddots&\vdots&\vdots\\
	0&0&\cdots&\bar{A}_{\bar{N}-1}&0\\
	0&0&\cdots&0&\bar{A}_{\bar{N}}
	\end{bmatrix},
	\end{equation}
	\begin{equation}\label{B}
	B=\begin{bmatrix}
	B_1\\
B_2\\
	\vdots\\
	B_N
	\end{bmatrix},\quad \bar{B}=\begin{bmatrix}
	\bar{B}_1\\
\bar{B}_2\\
	\vdots\\
	B_{\bar{N}}
	\end{bmatrix},
	\end{equation}
	\begin{equation}\label{C4}
	C:=\begin{bmatrix}
	C_1&C_2&\cdots&C_N
	\end{bmatrix},\quad\bar{C}:=\begin{bmatrix}
	\bar{C}_1&\bar{C}_2&\cdots&\bar{C}_{\bar{N}}
	\end{bmatrix}.
	\end{equation}
	We refer the reader to \cite{aktosunSymmetries,aktosun2007exact,busse2008generalized} for the details of the construction of the matrix triplets $(A, B, C)$ and $(\bar{A},\bar{B},\bar{C})$ in terms of the scattering data sets appearing in \eqref{2.10} and \eqref{2.11}.

	\item[\text{\rm(b)}] Having formed the matrix $F(y)$ appearing in \eqref{2.12}, we use it as input to the linear $2\times2$ matrix-valued integral equation, known as the Marchenko system of integral equations,
	\begin{equation}\label{2.15}
	K(x,y)+F(x+y)+\int_{x}^{\infty}dz\,K(x,z)\,F(z+y)=0,\qquad x<y,
	\end{equation}
	and we obtain the solution $K(x,y)$ for $-\infty<x<y<+\infty,$ where  $K(x,y)$ is expressed in terms of its entries as\begin{equation}\label{2.16}
	K(x,y):=\begin{bmatrix}
	\bar{K}_1(x,y)&K_1(x,y)\\
\noalign{\medskip}\bar{K}_2(x,y)&K_2(x,y)
	\end{bmatrix}.
	\end{equation}
It is understood that $K(x,y)=0$ when $x>y,$ and hence
\eqref{2.15} is valid only when $x<y.$

	\item[\text{\rm(c)}] Having obtained $K(x,y),$ we use  $K(x,x),$ which is defined as $K(x,x^+),$ in order to recover the potentials $u(x)$ and $v(x)$ via
	\begin{equation}\label{2.17}
	u(x)=-2\,K_1(x,x),
	\end{equation}
	\begin{equation}\label{2.18}
	v(x)=-2\,\bar{K}_2(x,x),
	\end{equation}
	\begin{equation}\label{2.19}
	\int_{x}^{\infty}dz\,u(z)\,v(z)=2\,\bar{K}_1(x,x),
	\end{equation}
	\begin{equation}\label{2.20}
	\int_{x}^{\infty}dz\,u(z)\,v(z)=2\,K_2(x,x).
	\end{equation}
	In order to prevent any confusion, we can use the superscript $(u,v)$ and the superscript $(p,s)$ for the input data sets appearing in \eqref{2.10} and \eqref{2.11} in order to associate them with \eqref{1.2} and \eqref{1.2b}, respectively. In other words, instead of \eqref{2.10} and \eqref{2.11} we can respectively use
	\begin{equation}\label{2.21}
	\Big\{T^{(u,v)}, R^{(u,v)}, L^{(u,v)}, \bar{T}^{(u,v)}, \bar{R}^{(u,v)}, \bar{L}^{(u,v)}\Big\},
	\end{equation}
	\begin{equation}\label{2.22}
	\Big\{{\lambda}^{(u,v)}_j,  \big\{c^{(u,v)}_{jk}\big\}_{k=0}^{m^{(u,v)}_j-1}\Big\}_{j=1}^{N^{(u,v)}}, \quad \Big\{\bar{\lambda}^{(u,v)}_j,  \big\{\bar{c}^{(u,v)}_{jk}\big\}_{k=0}^{\bar{m}^{(u,v)}_j-1}\Big\}_{j=1}^{\bar{N}^{(u,v)}}.
	\end{equation}
\end{enumerate}

We remark that the potentials $p(x)$ and $s(x)$ appearing in \eqref{1.2b} can be recovered from the input data expressed as in \eqref{2.10} and \eqref{2.11}, i.e.
\begin{equation}\label{2.23}
\Big\{T^{(p,s)}, R^{(p,s)}, L^{(p,s)}, \bar{T}^{(p,s)}, \bar{R}^{(p,s)}, \bar{L}^{(p,s)}\Big\},
\end{equation}
\begin{equation}\label{2.24}
\Big\{{\lambda}^{(p,s)}_j,  \big\{c^{(p,s)}_{jk}\big\}_{k=0}^{m^{(p,s)}_j-1}\Big\}_{j=1}^{N^{(p,s)}}, \qquad \Big\{\bar{\lambda}^{(p,s)}_j,  \big\{\bar{c}^{(p,s)}_{jk}\big\}_{k=0}^{\bar{m}^{(p,s)}_j-1}\Big\}_{j=1}^{\bar{N}^{(p,s)}},
\end{equation}
and this can be achieved by using the analogs of the steps (a)-(c) listed above and by starting with the input data set consisting of \eqref{2.23} and \eqref{2.24}.

\section{Transformations}
\label{sec:section3}

In this section we provide the transformations of the energy-dependent system \eqref{1.1} into the two energy-independent systems given in \eqref{1.2} and \eqref{1.2b}, respectively. Thus, we deal with three first-order systems given in \eqref{1.1}, \eqref{1.2}, and \eqref{1.2b}. For each of these systems there are four Jost solutions $\psi,$ $\phi,$ $\bar{ \psi},$ $\bar{ \phi}$ and there are six scattering coefficients $T,$ $R,$ $L,$ $\bar{T},$ $\bar{R},$ $\bar{L}.$ From \eqref{1.1}, \eqref{1.2}, and \eqref{1.2b}, we see that the corresponding coefficient matrices for these three systems all have the same asymptotic limits as $x\to\pm\infty,$ and hence for each of these three systems it is possible to define the corresponding Jost solutions $\psi,$ $\phi,$ $\bar{ \psi},$ $\bar{ \phi}$ in the same way by using the spatial asymptotics given in \eqref{2.1}-\eqref{2.4}. Similarly, for each of these three systems it is possible to define the scattering coefficients  $T,$ $R,$ $L,$ $\bar{T},$ $\bar{R},$ $\bar{L}$ in the same way by using the spatial asymptotics given in \eqref{2.5}-\eqref{2.8}.

By indicating the appropriate pair of potentials appearing in the off-diagonal entries in the corresponding  coefficient matrix, we are able to uniquely identify the Jost solutions and the scattering coefficients for each of the three first-order systems given in \eqref{1.1}, \eqref{1.2}, and \eqref{1.2b}, respectively. For example, we use $ \psi^{(\zeta q,\zeta r)},$ $\phi^{(\zeta q,\zeta r)},$ $\bar{\psi}^{(\zeta q,\zeta r)},$ $\bar{\phi}^{(\zeta q,\zeta r)}$ to denote the four Jost solutions corresponding to \eqref{1.1}; we use $ \psi^{(u,v)},$ $\phi^{(u,v)},$ $\bar{\psi}^{(u,v)},$ $\bar{\phi}^{(u,v)}$ to denote the four Jost solutions corresponding to \eqref{1.2}, etc. Similarly, we use  $T^{(\zeta q, \zeta r)},$ $R^{(\zeta q, \zeta r)},$ $L^{(\zeta q, \zeta r)},$ $\bar{T}^{(\zeta q, \zeta r)},$ $\bar{R}^{(\zeta q, \zeta r)},$ $\bar{L}^{(\zeta q, \zeta r)}$ to denote the scattering coefficients corresponding to \eqref{1.1}; as in \eqref{2.21}  we use  $T^{(u, v)},$ $R^{(u, v)},$ $L^{(u, v)},$ $\bar{T}^{(u, v)},$ $\bar{R}^{(u, v)},$ $\bar{L}^{(u, v)}$ to denote the scattering coefficients corresponding to the system \eqref{1.2}, etc.

In  the next theorem we provide the relations among the Jost solutions to \eqref{1.1} and \eqref{1.2}, respectively.

\begin{theorem}
\label{thm:theorem3.1}
Assume that the potentials $q(x)$ and $r(x)$ appearing in the first-order system \eqref{1.1}   belong to the Schwartz class. Then, the system \eqref{1.1} can be transformed into the system \eqref{1.2}, where the potential pair $(u,v)$ is related to $(\zeta q, \zeta r)$ as in \eqref{ux} and \eqref{vx}, and it also follows that the potentials $u(x)$ and $v(x)$ belong to the Schwartz class. The four Jost solutions corresponding to the respective systems \eqref{1.1} and \eqref{1.2} are related to each other as
\begin{equation}\label{3.1}
\psi^{(\zeta q, \zeta r)}=e^{i\mu/2}\begin{bmatrix}
\sqrt{\lambda}\,E&0\\\noalign{\medskip}\displaystyle\frac{i}{2}\,r(x)\,E&E^{-1}
\end{bmatrix}\psi^{(u,v)},
\end{equation}
\begin{equation}	\label{3.2}
\phi^{(\zeta q, \zeta r)}=\begin{bmatrix}
E&0\\\noalign{\medskip}\displaystyle\frac{i}{2\sqrt{\lambda}}\,r(x)\,E&\displaystyle\frac{1}{\sqrt{\lambda}}\,E^{-1}
\end{bmatrix}\phi^{(u,v)},
\end{equation}
\begin{equation}\label{3.3}
\bar{\psi}^{(\zeta q, \zeta r)}=e^{-i\mu/2}\begin{bmatrix}
E&0\\\noalign{\medskip}\displaystyle\frac{i}{2\sqrt{\lambda}}\,r(x)\,E&\displaystyle\frac{1}{\sqrt{\lambda}}\,
E^{-1}
\end{bmatrix}\bar{\psi}^{(u,v)},
\end{equation}
\begin{equation}\label{3.4}
\bar{\phi}^{(\zeta q, \zeta r)}=\begin{bmatrix}
\sqrt{\lambda}\,E&0\\\noalign{\medskip}\displaystyle\frac{i}{2}\,r(x)\,E&E^{-1}
\end{bmatrix}\bar{\phi}^{(u,v)},
\end{equation}
where $E$ is the quantity defined in \eqref{E} and the scalar constant $\mu$ is given by
	\begin{equation}\label{3.5}
\mu:=\int_{-\infty}^{\infty}dy\,q(y)\,r(y).
\end{equation}
\end{theorem}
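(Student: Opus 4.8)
The plan is to realize the passage from \eqref{1.1} to \eqref{1.2} as a gauge (Liouville-type) transformation and then to fix the four scalar normalizations by matching the one-sided spatial asymptotics. Writing $Q_1:=\begin{bmatrix}-i\zeta^2 & \zeta q\\ \zeta r & i\zeta^2\end{bmatrix}$ for the coefficient matrix of \eqref{1.1} and $Q_2:=\begin{bmatrix}-i\lambda & u\\ v & i\lambda\end{bmatrix}$ for that of \eqref{1.2}, with $\lambda=\zeta^2$ as in \eqref{1.3}, I would introduce the matrix
\[
\tilde M(x):=\begin{bmatrix}\sqrt{\lambda}\,E & 0\\ \frac{i}{2}\,r(x)\,E & E^{-1}\end{bmatrix},
\]
which is exactly the prefactor appearing in \eqref{3.1} and \eqref{3.4}; the prefactor in \eqref{3.2} and \eqref{3.3} is then $M(x):=\tilde M(x)/\sqrt{\lambda}$, a constant-in-$x$ scalar multiple of $\tilde M$. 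Since $\det\tilde M=\sqrt{\lambda}\neq0$ for $\lambda\neq0$, the matrix is invertible, so the asserted transformation of the system is equivalent to the single intertwining relation $\tilde M' = Q_1\tilde M - \tilde M Q_2$: substituting $\Psi=\tilde M\Phi$ shows that $\Phi$ solves \eqref{1.2} precisely when $\Psi$ solves \eqref{1.1}.

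I would verify this intertwining relation by direct differentiation, the only input being $E' = \tfrac{i}{2}\,q\,r\,E$ from \eqref{E} together with the definitions \eqref{ux} and \eqref{vx}. Differentiating the four entries of $\tilde M$ and comparing with the entries of $Q_1\tilde M-\tilde M Q_2$, the two diagonal relations collapse to the identity $E'=\tfrac{i}{2}qrE$, the $(1,2)$ entry vanishes precisely because $u=qE^{-2}$, and the $(2,1)$ entry forces $v=\bigl(-\tfrac{i}{2}r'+\tfrac14 qr^2\bigr)E^2$, reproducing \eqref{vx}. I expect this to be the main obstacle, not because any single computation is deep but because it is where the seemingly ad hoc formulas \eqref{ux}--\eqref{vx} are pinned down and where the branch choice $\sqrt{\lambda}=\zeta$ must be fixed consistently. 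That $u$ and $v$ lie in the Schwartz class is then immediate: $E^{\pm1}$ are smooth and bounded with all derivatives bounded, since $qr$ is integrable (which bounds $E^{\pm1}$) and Schwartz (which bounds every derivative of $E^{\pm1}$); hence multiplying a Schwartz function by $E^{\pm2}$ preserves the Schwartz class.

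For the Jost-solution identities \eqref{3.1}--\eqref{3.4}, I would use that each Jost solution of either system is the unique solution with the prescribed asymptotics \eqref{2.1}--\eqref{2.4}. Because $qr$ is integrable, \eqref{E} gives $E\to1$ as $x\to-\infty$ and $E\to e^{i\mu/2}$ as $x\to+\infty$ with $\mu$ as in \eqref{3.5}, while $r\to0$ at both ends; hence $\tilde M\to\mathrm{diag}(\sqrt{\lambda},1)$ at $-\infty$ and $\tilde M\to\mathrm{diag}(\sqrt{\lambda}\,e^{i\mu/2},e^{-i\mu/2})$ at $+\infty$, with $M$ scaled throughout by $\sqrt{\lambda}^{-1}$. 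Applying the intertwiner to a Jost solution of \eqref{1.2} produces a solution of \eqref{1.1}, whose leading asymptotics I would read off from these limits and then match to the defining asymptotics of the corresponding Jost solution of \eqref{1.1}.

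Concretely, for $\phi^{(u,v)}$ and $\bar\phi^{(u,v)}$, normalized at $-\infty$, the images $M\phi^{(u,v)}$ and $\tilde M\bar\phi^{(u,v)}$ already match the defining asymptotics of $\phi^{(\zeta q,\zeta r)}$ and $\bar\phi^{(\zeta q,\zeta r)}$, giving \eqref{3.2} and \eqref{3.4} with no extra constant. For $\psi^{(u,v)}$ and $\bar\psi^{(u,v)}$, normalized at $+\infty$, the limits of $\tilde M$ and $M$ introduce the phases $e^{-i\mu/2}$ and $e^{i\mu/2}$ respectively in the surviving component, so matching the asymptotics of $\psi^{(\zeta q,\zeta r)}$ and $\bar\psi^{(\zeta q,\zeta r)}$ requires the compensating scalars $e^{i\mu/2}$ and $e^{-i\mu/2}$ recorded in \eqref{3.1} and \eqref{3.3}. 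In every case the intertwining relation guarantees that the image is a genuine solution of \eqref{1.1}, and uniqueness of the Jost solution upgrades the asymptotic match to the claimed identity on all of $\mathbb{R}$.
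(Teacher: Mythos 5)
Your proposal is correct and follows essentially the same route as the paper, whose proof (deferring details to \cite{Ercan2018}) consists precisely of directly verifying that the transformations \eqref{3.1}--\eqref{3.4} are compatible with the systems \eqref{1.1}--\eqref{E} and the Jost asymptotics \eqref{2.1}--\eqref{2.4}; your intertwining relation $\tilde M' = Q_1\tilde M - \tilde M Q_2$ together with the asymptotic matching and uniqueness of the Jost solutions is exactly that verification, written out in full. The only minor imprecision is that the $(2,2)$ diagonal entry collapses to $E'=\tfrac{i}{2}qrE$ only after also substituting $u=qE^{-2}$, but since \eqref{ux} is among your stated inputs this does not affect the argument.
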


\begin{proof}
	We refer the reader to \cite{Ercan2018} for the motivation and the detailed proof. One can directly verify that the transformations given in \eqref{3.1}-\eqref{3.4} are compatible with \eqref{1.1}-\eqref{E} and \eqref{2.1}-\eqref{2.4}.
\end{proof}

Next we provide the relations among the Jost solutions to \eqref{1.1} and to \eqref{1.2b}, respectively.

\begin{theorem}
\label{thm:theorem3.2}
Assume that the potentials $q(x)$ and $r(x)$ appearing in the first-order system \eqref{1.1}   belong to the Schwartz class. Then, the system \eqref{1.1} can be transformed into the system \eqref{1.2b}, where the potential pair $(p,s)$ is related to $(\zeta q, \zeta r)$ as in \eqref{p} and \eqref{s}, and it also follows that the potentials $p(x)$ and $s(x)$ belong to the Schwartz class. The four Jost solutions corresponding to the respective systems \eqref{1.1} and \eqref{1.2b} are related to each other as
\begin{equation}\label{3.6}
\psi^{(\zeta q, \zeta r)}=e^{i\mu/2}\begin{bmatrix}
\displaystyle\frac{1}{\sqrt{\lambda}}\,E&-\displaystyle\frac{i}{2\sqrt{\lambda}}\,q(x)\,E
\\
\noalign{\medskip}0&E^{-1}
\end{bmatrix}\psi^{(p,s)},
\end{equation}
	\begin{equation}\label{3.7}
\phi^{(\zeta q, \zeta r)}=\begin{bmatrix}
E&-\displaystyle\frac{i}{2}\,q(x)\,E\\
\noalign{\medskip}0&\sqrt{\lambda}\,E^{-1}
\end{bmatrix}\phi^{(p,s)},
\end{equation}
\begin{equation}\label{3.8}
\bar{\psi}^{(\zeta q, \zeta r)}=e^{-i\mu/2}\begin{bmatrix}
E&-\displaystyle\frac{i}{2}\,q(x)\,E\\
\noalign{\medskip}0&\sqrt{\lambda}\,E^{-1}
\end{bmatrix}\bar{\psi}^{(p,s)},
\end{equation}
\begin{equation}\label{3.9}
\bar{\phi}^{(\zeta q, \zeta r)}=\begin{bmatrix}
\displaystyle\frac{1}{\sqrt{\lambda}}\,E&-\displaystyle\frac{i}{2\sqrt{\lambda}}\,q(x)\,E\\
\noalign{\medskip}0&E^{-1}
\end{bmatrix}\bar{\phi}^{(p,s)},
\end{equation}
where $E$ and $\mu$  are the quantities appearing in \eqref{E} and \eqref{3.5}, respectively.
\end{theorem}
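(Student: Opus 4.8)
The statement has two parts: an exact gauge equivalence between \eqref{1.1} and \eqref{1.2b}, recorded in the solution relations \eqref{3.6}--\eqref{3.9}, and the regularity claim that $p$ and $s$ belong to the Schwartz class. My plan is to read \eqref{3.6}--\eqref{3.9} as a single $x$-dependent point transformation, up to a constant scalar, carrying solutions of \eqref{1.2b} to solutions of \eqref{1.1}, and to establish it by direct substitution. Writing $X$ for the coefficient matrix of \eqref{1.1} and $\tilde X$ for that of \eqref{1.2b} (with $\lambda=\zeta^2$), and letting $G(x)$ be the $2\times2$ matrix in, say, \eqref{3.7}, the assertion that $\Psi:=G\,\Phi$ solves \eqref{1.1} whenever $\Phi$ solves \eqref{1.2b} is equivalent to the intertwining identity $G'=X\,G-G\,\tilde X$, i.e. $\tilde X=G^{-1}XG-G^{-1}G'$. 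Since $\det G$ equals a nonzero $x$-independent constant (a power of $\sqrt\lambda$), the matrix $G$ is invertible and $\Phi\mapsto G\Phi$ is a linear isomorphism between the two-dimensional solution spaces of \eqref{1.2b} and \eqref{1.1}.

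First I would verify $\tilde X=G^{-1}XG-G^{-1}G'$ entrywise, using only the scalar relation $E'=\tfrac{i}{2}\,q\,r\,E$ from \eqref{E} together with the definitions \eqref{p} and \eqref{s}. This computation is the heart of the proof and runs parallel to the verification underlying Theorem~\ref{thm:theorem3.1}, with the two components of the wavefunction (and the roles of $q$ and $r$) interchanged, reflecting the parallel structure of the two energy-independent reductions \eqref{1.2} and \eqref{1.2b}. The diagonal entries collapse to $\mp i\lambda$ once the $\tfrac{i}{2}qr$ contributions from $G^{-1}G'$ and from $G^{-1}XG$ cancel; the $(2,1)$ entry reduces at once to $r\,E^{2}=s$; and the $(1,2)$ entry is the delicate one, since there the derivative of the off-diagonal term of $G$ must combine with $p$ and with $q'$ to reproduce \eqref{p}. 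I expect the precise power of $E$ in that off-diagonal term to be the single most error-prone point, and I would treat the $(1,2)$ entry as the principal checkpoint.

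Having shown that $G\Phi$ solves \eqref{1.1}, I would then pin down the four individual correspondences together with their scalar prefactors by matching spatial asymptotics. Because $q,r$ are Schwartz, $q,r\to0$ and, from \eqref{E}, $E\to1$ as $x\to-\infty$ while $E\to e^{i\mu/2}$ as $x\to+\infty$, with $\mu$ as in \eqref{3.5}. Evaluating $G$ at the relevant infinity thus reduces it to a constant diagonal matrix: for the right-normalized $\phi,\bar\phi$, defined at $-\infty$, no compensating constant is needed, whereas for $\psi,\bar\psi$, defined at $+\infty$, the limit $E\to e^{i\mu/2}$ forces the factors $e^{\pm i\mu/2}$ appearing in \eqref{3.6} and \eqref{3.8}. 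The difference between the two scalings of $G$ --- the presence of $1/\sqrt\lambda$ in \eqref{3.6} and \eqref{3.9} versus its absence in \eqref{3.7} and \eqref{3.8} --- is precisely what makes each transformed solution carry the normalization prescribed in \eqref{2.1}--\eqref{2.4}; since the intertwining identity is invariant under multiplying $G$ by a constant, every such scaling still yields a solution of \eqref{1.1}. Invoking the existence and uniqueness of Jost solutions from Theorem~\ref{thm:theorem2.1}(a), which applies to \eqref{1.1} by the remark opening this section, each transformed solution with the correct asymptotics must coincide with the corresponding Jost solution of \eqref{1.1}, giving \eqref{3.6}--\eqref{3.9}.

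For the regularity claim I would argue that $E$ and $E^{-1}$ are smooth and bounded with all derivatives bounded: from $E'=\tfrac{i}{2}qr\,E$ every derivative of $E^{\pm1}$ is a polynomial in the derivatives of the Schwartz function $qr$ times $E^{\pm1}$, so multiplication by $E^{\pm2}$ preserves the Schwartz class. Since $q'$, $q^2r$, and $r$ are Schwartz, the combinations defining $p$ and $s$ in \eqref{p} and \eqref{s} are Schwartz as well. The main obstacle throughout is not conceptual but the bookkeeping in the intertwining identity --- in particular fixing the exact power of $E$ in the off-diagonal entry of $G$; the rest follows from invertibility of $G$, the limiting behaviour of $E$, and uniqueness of the Jost solutions.
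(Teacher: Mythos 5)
Your proposal is correct and takes essentially the same route as the paper, whose own proof consists only of the remark that one can directly verify that \eqref{3.6}--\eqref{3.9} are compatible with \eqref{1.1}, \eqref{1.3}--\eqref{s}, and \eqref{2.1}--\eqref{2.4} (deferring details to the thesis): your intertwining identity $G'=XG-G\tilde{X}$, the asymptotic matching via $E\to1$ as $x\to-\infty$ and $E\to e^{i\mu/2}$ as $x\to+\infty$, the uniqueness of the Jost solutions, and the standard closure argument for the Schwartz class are precisely the fleshed-out version of that verification. Your instinct that the power of $E$ in the off-diagonal entry is the delicate point is well placed: carrying out the $(1,1)$-entry check with $s=rE^{2}$ forces that entry to be $-\tfrac{i}{2}\,q\,E^{-1}$ (respectively $-\tfrac{i}{2\sqrt{\lambda}}\,q\,E^{-1}$), so the factor $E$ printed in \eqref{3.6}--\eqref{3.9} appears to be a typo, as one can confirm against the zero-energy solutions \eqref{6.3} and \eqref{6.4}.
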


\begin{proof}
	The proof is similar to the proof of Theorem~\ref{thm:theorem3.1}. We again refer the reader to \cite{Ercan2018} for the motivation and the details. One can directly verify that the transformations given in \eqref{3.6}-\eqref{3.9} are compatible with
\eqref{1.1}, \eqref{1.3}-\eqref{s}, and \eqref{2.1}-\eqref{2.4}.
\end{proof}

In the next theorem we summarize the transformations among the scattering coefficients for the first-order systems \eqref{1.1}, \eqref{1.2}, and \eqref{1.2b}.

\begin{theorem}
\label{thm:theorem3.3}
Assume that the potentials $q(x)$ and $r(x)$ appearing in the first-order system \eqref{1.1}   belong to the Schwartz class. Then, the scattering coefficients for \eqref{1.1} are related to the scattering coefficients for \eqref{1.2} and for \eqref{1.2b} as
\begin{equation}\label{3.10}
T^{(\zeta q, \zeta r)}=e^{-i\mu/2}\,T^{(u, v)}=e^{-i\mu/2}\,T^{(p,s)},
\end{equation}
\begin{equation}\label{3.11}
\bar{T}^{(\zeta q, \zeta r)}=e^{i\mu/2}\,\bar{T}^{(u,v)}=e^{i\mu/2}\,\bar{T}^{(p,s)},
\end{equation}
\begin{equation}\label{3.12}
R^{(\zeta q, \zeta r)}=\frac{e^{-i\mu}}{\sqrt{\lambda}}\,R^{(u,v)}=e^{-i\mu}\,\sqrt{\lambda}\,R^{(p,s)},
\end{equation}
\begin{equation}\label{3.13}
\bar{R}^{(\zeta q, \zeta r)}=e^{i\mu}\,\sqrt{\lambda}\,\bar{R}^{(u,v)}=\frac{e^{i\mu}}{\sqrt{\lambda}}\,\bar{R}^{(p,s)},
\end{equation}
\begin{equation}\label{3.14}
L^{(\zeta q, \zeta r)}=\sqrt{\lambda}\,L^{(u,v)}=\frac{1}{\sqrt{\lambda}}\,L^{(p,s)},
\end{equation}
\begin{equation}\label{3.15}
\bar{L}^{(\zeta q, \zeta r)}=\frac{1}{\sqrt{\lambda}}\,\bar{L}^{(u,v)}=\sqrt{\lambda}\,\bar{L}^{(p,s)},
\end{equation}
where  $\mu$ is the quantity defined in \eqref{3.5}, the quantity $\sqrt{\lambda}$ is the same as the spectral parameter $\zeta$ appearing in \eqref{1.3}, and the superscripts for the scattering coefficients are as indicated in \eqref{2.21} and \eqref{2.23}. 	
\end{theorem}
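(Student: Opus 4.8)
The plan is to read the scattering coefficients for \eqref{1.1} directly off the Jost-solution transformations established in Theorems~\ref{thm:theorem3.1} and~\ref{thm:theorem3.2}, by substituting those transformations into the defining large-$|x|$ asymptotics \eqref{2.5}--\eqref{2.8} and matching the coefficients of $e^{-i\lambda x}$ and $e^{i\lambda x}$. The only analytic input needed is the behavior of the transformation matrices at the two spatial infinities. Because $q$ and $r$ belong to the Schwartz class, we have $q(x)\to 0$ and $r(x)\to 0$ as $x\to\pm\infty$, so every off-diagonal entry of each transformation matrix vanishes in both limits; and from \eqref{E} together with \eqref{3.5} we have $E\to 1$ as $x\to-\infty$ and $E\to e^{i\mu/2}$ as $x\to+\infty$. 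Hence each transformation matrix reduces, in each of the two limits, to an explicit constant diagonal matrix whose entries are products of $e^{\pm i\mu/2}$ and powers of $\sqrt{\lambda}$, which is all that the matching requires.

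First I would treat the relations to \eqref{1.2} using Theorem~\ref{thm:theorem3.1}. To obtain $T$ and $L$, substitute \eqref{3.1} for $\psi^{(\zeta q,\zeta r)}$ and let $x\to-\infty$, where $\psi^{(u,v)}$ obeys \eqref{2.5}; matching the lower component gives $T^{(\zeta q,\zeta r)}=e^{-i\mu/2}\,T^{(u,v)}$, and matching the upper component, after eliminating $T^{(\zeta q,\zeta r)}$, gives $L^{(\zeta q,\zeta r)}=\sqrt{\lambda}\,L^{(u,v)}$. In the same way, inserting \eqref{3.2} into \eqref{2.6} as $x\to+\infty$ reproduces the relation for $T$ and yields $R^{(\zeta q,\zeta r)}=e^{-i\mu}\,R^{(u,v)}/\sqrt{\lambda}$; inserting \eqref{3.4} into \eqref{2.7} as $x\to+\infty$ gives $\bar T^{(\zeta q,\zeta r)}=e^{i\mu/2}\,\bar T^{(u,v)}$ and $\bar R^{(\zeta q,\zeta r)}=e^{i\mu}\sqrt{\lambda}\,\bar R^{(u,v)}$; and inserting \eqref{3.3} into \eqref{2.8} as $x\to-\infty$ reproduces the relation for $\bar T$ and gives $\bar L^{(\zeta q,\zeta r)}=\bar L^{(u,v)}/\sqrt{\lambda}$. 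The sign difference between the $e^{-i\mu/2}$ factor in the $T$-relation and the $e^{+i\mu/2}$ factor in the $\bar T$-relation originates precisely from the scalar prefactors $e^{\pm i\mu/2}$ carried by $\psi$ and $\bar\psi$ in \eqref{3.1} and \eqref{3.3}.

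The relations to \eqref{1.2b} follow identically from Theorem~\ref{thm:theorem3.2}, using \eqref{3.6}--\eqref{3.9} in place of \eqref{3.1}--\eqref{3.4}. The only structural difference is that the $\sqrt{\lambda}$ factors now occupy different matrix entries, which relocates each power of $\sqrt{\lambda}$ to the opposite side of the identity: for instance \eqref{3.7} substituted into \eqref{2.6} as $x\to+\infty$ yields $R^{(\zeta q,\zeta r)}=e^{-i\mu}\sqrt{\lambda}\,R^{(p,s)}$, which is the second equality of \eqref{3.12}, and \eqref{3.9} substituted into \eqref{2.7} yields $\bar R^{(\zeta q,\zeta r)}=e^{i\mu}\,\bar R^{(p,s)}/\sqrt{\lambda}$, the second equality of \eqref{3.13}. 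Running through all four Jost solutions this way produces the second equality in each of \eqref{3.10}--\eqref{3.15}.

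I expect the argument to be essentially mechanical, so the only genuine difficulty is bookkeeping: keeping straight which Jost solution is normalized at which infinity (and hence which end supplies each scattering coefficient) and tracking the powers of $\sqrt{\lambda}$ together with the $e^{\pm i\mu/2}$ factors through each matching. As an internal consistency check, which I would carry out first, I would verify that each transformation in \eqref{3.1}--\eqref{3.9} respects the prescribed normalization at the end where the relevant Jost solution is defined by \eqref{2.1}--\eqref{2.4}; this confirms that the transformed object is indeed the corresponding Jost solution of \eqref{1.1} and guarantees that the two limiting diagonal matrices entering the matching are the correct ones. I would also note, as a byproduct, that comparing the two independent derivations forces the hidden identities $T^{(u,v)}=T^{(p,s)}$, $\bar T^{(u,v)}=\bar T^{(p,s)}$, and $R^{(u,v)}=\lambda\,R^{(p,s)}$, consistent with \eqref{3.10}--\eqref{3.13}.
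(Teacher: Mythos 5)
Your proposal is correct and follows exactly the paper's own route: the paper proves Theorem~\ref{thm:theorem3.3} by substituting the Jost-solution transformations \eqref{3.1}--\eqref{3.4} and \eqref{3.6}--\eqref{3.9} into the asymptotics \eqref{2.5}--\eqref{2.8} and matching, deferring the bookkeeping to \cite{Ercan2018}. Your write-up simply carries out those deferred details (including the limits $E\to1$ as $x\to-\infty$ and $E\to e^{i\mu/2}$ as $x\to+\infty$ and the normalization checks), and all of your coefficient matchings and $\sqrt{\lambda}$, $e^{\pm i\mu/2}$ factors are correct.
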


\begin{proof}
Using \eqref{3.1}-\eqref{3.4} and \eqref{3.6}-\eqref{3.9} in the respective asymptotics \eqref{2.5}-\eqref{2.8} corresponding to each of \eqref{1.1}, \eqref{1.2}, and \eqref{1.2b}, we obtain \eqref{3.10}-\eqref{3.15}. For the details we refer the reader to \cite{Ercan2018}.
\end{proof}

For a proper treatment of the bound states with multiplicities and the treatment of the corresponding norming constants in the presence of non-simple bound states, we refer the reader to  \cite{aktosunSymmetries,aktosun2007exact,busse2008generalized}. As shown from the second terms appearing on the right-hand sides of \eqref{2.13} and \eqref{2.14}, the bound states with multiplicities and their corresponding norming constants can be handled by using the matrix triplets $(A, B, C)$ and $(\bar{A},\bar{B}, \bar{C})$ given in \eqref{A4}-\eqref{C4}.

 From \eqref{3.10} we see that the poles of $T^{(\zeta q, \zeta r)}$ and of $T^{(u,v)}$ are closely related to each other, and similarly from \eqref{3.11} we see that the poles of
 $\bar{T}^{(\zeta q, \zeta r)}$ and of $\bar{T}^{(u,v)}$ are closely related to each other. From \eqref{1.3} and \eqref{3.10} we see that the bound states for \eqref{1.1} related to $T^{(\zeta q, \zeta r)}$ occur at the $\zeta$-values $\zeta_j$ and $-\zeta_j$ for $j=1, \dots, N,$ where $\pm\zeta_j$ and $\lambda_j$ are related to each other as $\lambda_j=(\pm\zeta_j)^2.$ We remark that \eqref{3.10} also implies that $T^{(\zeta q, \zeta r)}$ and $T^{(u,v)}$ have the same number of poles in $\mathbb{C^+}$ and hence we can use $N$ to denote that common number. In a similar way, from \eqref{1.3} and  \eqref{3.11} we see that the bound states for \eqref{1.1} related to $\bar{T}^{(\zeta q, \zeta r)}$ occur at the $\zeta$-values $\bar{\zeta}_j$ and $-\bar{\zeta}_j$ for $j=1, \dots, \bar{N},$ where $\pm\bar{\zeta_j}$ and $\bar{\lambda_j}$ are related to each other as $\bar{\lambda}_j=(\pm\bar{\zeta}_j)^2.$ We also remark that \eqref{3.11} implies that $\bar{T}^{(\zeta q,\zeta r)}$ and $\bar{T}^{(u,v)}$ have the same number of poles in $\mathbb{C^-}$ and hence we can use $\bar{N}$ to denote that common number.  Since $\zeta_j$ and $-\zeta_j$ correspond to the same $\lambda_j,$ from \eqref{3.10} we see that each of $\zeta_j$ and $-\zeta_j$ has multiplicity $m_j$ and in an analogous manner from \eqref{3.11} we see that each of $\bar{\zeta}_j$ and $-\bar{\zeta}_j$ has multiplicity $\bar{m}_j.$

Due to the flexibility and simplicity of including the bound-state norming constants in the row vectors $C$ and $\bar{C}$ appearing in \eqref{C4}, the treatment of the bound states in the solution to the corresponding inverse scattering problems is greatly simplified.

\section{The direct problem for the energy-dependent system}
\label{sec:section4}

In this section we describe the solution to the direct scattering problem for \eqref{1.1}. This involves the determination of the corresponding scattering coefficients $T^{(\zeta q, \zeta r)},$ $R^{(\zeta q, \zeta r)},$ $L^{(\zeta q, \zeta r)},$ $\bar{T}^{(\zeta q, \zeta r)},$ $\bar{R}^{(\zeta q, \zeta r)},$ $\bar{L}^{(\zeta q, \zeta r)}$  and the relevant bound-state information when the pair of potential $(q,r)$ in the Schwartz class is given. The solution is obtained as described below:

\begin{enumerate}
	\item[\text{\rm(a)}] We solve the system \eqref{1.1} and uniquely determine the four Jost solutions $\psi^{(\zeta q,\zeta r)},$ $\phi^{(\zeta q,\zeta r)},$ $\bar{\psi}^{(\zeta q,\zeta r)},$ $\bar{\phi}^{(\zeta q,\zeta r)}$  satisfying the respective asymptotics \eqref{2.1}-\eqref{2.4}. The existence and uniqueness of the corresponding Jost solutions can be proved in the standard manner when $q(x)$ and $r(x)$ belong to the Schwartz class. For the details we refer the reader to \cite{Ercan2018}. For example, \eqref{1.1} and \eqref{2.1} for $\psi^{(\zeta q,\zeta r)}$ can be combined in order to obtain an integral equation for $\psi^{(\zeta q,\zeta r)}$ when $q(x)$ and $r(x)$ belong to the Schwartz class. One can prove that the resulting integral equation can be solved via iteration by representing its solution as a uniformly convergent infinite series. Alternatively, one can use the already known existence and uniqueness of the Jost solution $\psi^{(u,v)}$ associated with \eqref{1.2}, and by using \eqref{3.1} one can conclude the existence and uniqueness of  $\psi^{(\zeta q,\zeta r)}$ associated with \eqref{1.1}. The existence and uniqueness of the remaining Jost solutions  $\phi^{(\zeta q,\zeta r)},$ $\bar{\psi}^{(\zeta q,\zeta r)},$ and $\bar{\phi}^{(\zeta q,\zeta r)}$ can be established in similar manner.

\item[\text{\rm(b)}] We then determine the scattering coefficients $T^{(\zeta q, \zeta r)},$ $R^{(\zeta q, \zeta r)},$ $L^{(\zeta q, \zeta r)},$ $\bar{T}^{(\zeta q, \zeta r)},$ $\bar{R}^{(\zeta q, \zeta r)},$ $\bar{L}^{(\zeta q, \zeta r)}$
 from the Jost solutions  $\psi^{(\zeta q,\zeta r)},$ $\phi^{(\zeta q,\zeta r)},$ $\bar{\psi}^{(\zeta q,\zeta r)},$ $\bar{\phi}^{(\zeta q,\zeta r)}$ constructed in the previous step, by using the respective  asymptotics  described in \eqref{2.5}-\eqref{2.8}. Alternatively, those scattering coefficients can be obtained by using some Wronskian relations among the constructed Jost solutions. We remark that the Wronskian of any two column-vector solutions to \eqref{1.1} is independent of $x,$ which is a consequence of the fact that the coefficient matrix in \eqref{1.1} has a zero trace. For any two column-vector solutions  $\begin{bmatrix}
\alpha_1\\
\beta_1
\end{bmatrix} $ and $\begin{bmatrix}
\alpha_2\\
\beta_2
\end{bmatrix} $ to \eqref{1.1}, we recall that the Wronskian is defined as
\begin{equation}\label{4.1}
\begin{bmatrix}
\begin{bmatrix}
\alpha_1\\
\noalign{\medskip}
\beta_2
\end{bmatrix};\begin{bmatrix}
\alpha_2\\
\noalign{\medskip}
\beta_2
\end{bmatrix}
\end{bmatrix}:=\begin{vmatrix}
\alpha_1&\alpha_2\\
\noalign{\medskip}
\beta_1&\beta_2
\end{vmatrix},
\end{equation}
where the absolute bars on the right-hand side in \eqref{4.1} denote the determinant of a $2\times2$ matrix. By evaluating certain Wronskians of the Jost solutions to \eqref{1.1} at $x=+\infty$ or $x=-\infty$ and by using \eqref{2.1}-\eqref{2.8}, it can directly be verified that we have
\begin{equation}\label{4.2}
\begin{bmatrix}
\psi^{(\zeta q,\zeta r)};\phi^{(\zeta q,\zeta r)}
\end{bmatrix}=-\displaystyle\frac{1}{T^{(\zeta q,\zeta r)}},
\end{equation}
\begin{equation}\label{4.3}
\begin{bmatrix}
\bar{\psi}^{(\zeta q,\zeta r)};\bar{\phi}^{(\zeta q,\zeta r)}
\end{bmatrix}=\displaystyle\frac{1}{\bar{T}^{(\zeta q,\zeta r)}},
\end{equation}
\begin{equation}\label{4.4}
\begin{bmatrix}
\psi^{(\zeta q,\zeta r)};\bar{\phi}^{(\zeta q,\zeta r)}
\end{bmatrix}=-\displaystyle\frac{\bar{R}^{(\zeta q,\zeta r)}}{\bar{T}^{(\zeta q,\zeta r)}}=\displaystyle\frac{L^{(\zeta q,\zeta r)}}{T^{(\zeta q,\zeta r)}},
\end{equation}
\begin{equation}\label{4.5}
\begin{bmatrix}
\bar{\psi}^{(\zeta q,\zeta r)};\phi^{(\zeta q,\zeta r)}
\end{bmatrix}=\displaystyle\frac{R^{(\zeta q,\zeta r)}}{T^{(\zeta q,\zeta r)}}=-\displaystyle\frac{\bar{L}^{(\zeta q,\zeta r)}}{\bar{T}^{(\zeta q,\zeta r)}},
\end{equation}
Hence using \eqref{4.2}-\eqref{4.5} the scattering coefficients can be expressed in terms of the Wronskians of the constructed Jost solutions as
\begin{equation}\label{4.6}
T^{(\zeta q,\zeta r)}=\displaystyle\frac{1}{\begin{bmatrix}
	\phi^{(\zeta q,\zeta r)};\psi^{(\zeta q,\zeta r)}
	\end{bmatrix}},
\end{equation}
\begin{equation}\label{4.7}
\bar{T}^{(\zeta q,\zeta r)}=\displaystyle\frac{1}{\begin{bmatrix}
	\bar{\psi}^{(\zeta q,\zeta r)};\bar{\phi}^{(\zeta q,\zeta r)}
	\end{bmatrix}},
\end{equation}
\begin{equation}\label{4.8}
R^{(\zeta q,\zeta r)}=\displaystyle\frac{\begin{bmatrix}
	\phi^{(\zeta q,\zeta r)};\bar{\psi}^{(\zeta q,\zeta r)}
	\end{bmatrix}}{\begin{bmatrix}
	\psi^{(\zeta q,\zeta r)};\phi^{(\zeta q,\zeta r)}
	\end{bmatrix}},
\end{equation}
\begin{equation}\label{4.9}
\bar{R}^{(\zeta q,\zeta r)}=\displaystyle\frac{\begin{bmatrix}
	\bar{\phi}^{(\zeta q,\zeta r)};\psi^{(\zeta q,\zeta r)}
	\end{bmatrix}}{\begin{bmatrix}
	\bar{\psi}^{(\zeta q,\zeta r)};\bar{\phi}^{(\zeta q,\zeta r)}
	\end{bmatrix}},
\end{equation}
\begin{equation}\label{4.10}
L^{(\zeta q,\zeta r)}=\displaystyle\frac{\begin{bmatrix}
	\psi^{(\zeta q,\zeta r)};\bar{\phi}^{(\zeta q,\zeta r)}
	\end{bmatrix}}{\begin{bmatrix}
	\phi^{(\zeta q,\zeta r)};\psi^{(\zeta q,\zeta r)}
	\end{bmatrix}},
\end{equation}
\begin{equation}\label{4.11}
\bar{L}^{(\zeta q,\zeta r)}=\displaystyle\frac{\begin{bmatrix}
	\phi^{(\zeta q,\zeta r)};\bar{\psi}^{(\zeta q,\zeta r)}
	\end{bmatrix}}{\begin{bmatrix}
	\bar{\psi}^{(\zeta q,\zeta r)};\bar{\phi}^{(\zeta q,\zeta r)}
	\end{bmatrix}}.
\end{equation}

\item[\text{\rm(c)}] By using \eqref{3.10} and Theorem~\ref{thm:theorem2.1} one can prove that the transmission coefficient $T^{(\zeta q,\zeta r)}$ is meromorphic in $\lambda\in\mathbb{C^+}$ with a finite number of poles there corresponding to the bound states. Similarly, by using \eqref{3.11} and Theorem~\ref{thm:theorem2.1} one can prove that the transmission coefficient $\bar{T}^{(\zeta q,\zeta r)}$ is meromorphic in  $\lambda\in\mathbb{C^-}$ with a finite number of the poles there corresponding to the bound states. The location and multiplicities of such bound-state poles in $\mathbb{C^+}$ and $\mathbb{C^-},$ respectively, are determined as a result of the uniqueness of the meromorphic extensions from $\mathbb{R}$ to  $\mathbb{C^+}$ and $\mathbb{C^-},$ respectively. The norming constants $c_{jk}$ and $\bar{c}_{jk}$ can then be constructed in a similar way as in  $(4.29)$ and $(4.49)$ of \cite{busse2008generalized}.	
\end{enumerate}

\section{The inverse scattering problem for the energy-dependent system}
\label{sec:section5}

In this section we provide an outline for the solution to the inverse scattering problem for \eqref{1.1}. This involves the recovery of the potentials $q(x)$ and $r(x)$ appearing in \eqref{1.1} from the input data  consisting of the set of scattering coefficients
\begin{equation}\label{5.1}
 \Big\{T^{(\zeta q, \zeta r)}, R^{(\zeta q, \zeta r)}, L^{(\zeta q, \zeta r)}, \bar{T}^{(\zeta q, \zeta r)}, \bar{R}^{(\zeta q, \zeta r)}, \bar{L}^{(\zeta q, \zeta r)}\Big\},
\end{equation}
and the bound-state information
\begin{equation}\label{5.2}
\Big\{\pm\zeta_j,  \big\{c^{(\zeta q, \zeta r)}_{jk}\big\}_{k=0}^{m_j^{(\zeta q, \zeta r)}-1}\Big\}_{j=1}^{N^{(\zeta q, \zeta r)}}, \quad \Big\{\pm\bar{\zeta}_j,  \big\{\bar{c}^{(\zeta q, \zeta r)}_{jk}\big\}_{k=0}^{\bar{m}_j^{(\zeta q, \zeta r)}-1}\Big\}_{j=1}^{\bar{N}^{(\zeta q, \zeta r)}}.
\end{equation}

 From \eqref{1.3}, \eqref{3.10}, and \eqref{3.11} we see that $\zeta$ appears as $\zeta^2$ in $T^{(\zeta q, \zeta r)}$ and in $\bar{T}^{(\zeta q, \zeta r)}.$ From \eqref{3.10} it follows that the number of poles for each of $T^{(\zeta q, \zeta r)},$ $T^{(u,v)},$ and $T^{(p,s)}$ in $\mathbb{C^+}$ is the same, i.e. we have
\begin{equation}\label{5.3}
N^{(\zeta q, \zeta r)}=N^{(u,v)}=N^{(p,s)},
\end{equation}
and we can use $N$ to denote their common value. Similarly, from \eqref{3.11} it follows that the number of poles for each of $\bar{T}^{(\zeta q, \zeta r)},$ $\bar{T}^{(u,v)},$ and $\bar{T}^{(p,s)}$ in $\mathbb{C^-}$ is the same, i.e.
\begin{equation}\label{5.4}
\bar{N}^{(\zeta q, \zeta r)}=\bar{N}^{(u,v)}=\bar{N}^{(p,s)},
\end{equation}
and we can use $\bar{N}$ to denote their common value. Note that the value of $e^{-i\mu/2},$ where $\mu$ is the constant appearing in \eqref{3.5}, can uniquely be determined from the large $\zeta$-asymptotics of $T^{(\zeta q, \zeta r)}$ via
\begin{equation}\label{5.5}
T^{(\zeta q, \zeta r)}=e^{-i\mu/2}\left[1+O\left(\frac{1}{\zeta}\right)\right],\qquad \zeta\to\pm\infty ,
\end{equation}
which directly follows from \eqref{1.3}, \eqref{T}, and \eqref{3.10}. Alternatively, $e^{i\mu/2}$ can be determined from  the large $\zeta$-asymptotics of $\bar{T}^{(\zeta q, \zeta r)}$ via
\begin{equation}\label{5.6}
\bar{T}^{(\zeta q, \zeta r)}=e^{i\mu/2}\left[1+O\left(\frac{1}{\zeta}\right)\right],\qquad \zeta\to\pm\infty,
\end{equation}
which directly follows from \eqref{1.3}, \eqref{T1}, and \eqref{3.11}. Let us remark that from \eqref{4.4} and \eqref{4.5} we have
	\begin{equation}\label{5.7}
L^{(\zeta q, \zeta r)}=-\displaystyle \frac{\bar{R}^{(\zeta q, \zeta r)}\,T^{(\zeta q, \zeta r)}}{\bar{T}^{(\zeta q, \zeta r)}}, \quad \bar{L}^{(\zeta q, \zeta r)}=-\displaystyle\frac{R^{(\zeta q, \zeta r)}\,\bar{T}^{(\zeta q, \zeta r)}}{T^{(\zeta q, \zeta r)}},
\end{equation}
and hence, in order to solve the inverse problem, in the input data set instead of \eqref{5.1} we can use its subset
\begin{equation}\label{5.8}
\Big\{T^{(\zeta q, \zeta r)}, R^{(\zeta q, \zeta r)}, \bar{T}^{(\zeta q, \zeta r)}, \bar{R}^{(\zeta q, \zeta r)}\Big\}.
\end{equation}

 We recall that $\pm\zeta_j$ and $\pm\bar{\zeta}_j$ appearing in \eqref{5.2} are related to $\lambda_j$ and $\bar{\lambda}_j$ appearing in \eqref{2.11} as
\begin{equation}\label{5.9}
\left(\pm\zeta_j\right)^2=\lambda_j,\qquad j=1,\dots,N,
\end{equation}
\begin{equation}\label{5.10}
\left(\pm\bar{\zeta}_j\right)^2=\bar{\lambda}_j,\qquad j=1,\dots,\bar{N}.
\end{equation}
 From \eqref{3.10} it also follows that the multiplicity of $\zeta_j^{(\zeta q, \zeta r)}$ and the multiplicity of $\lambda_j$ are equal, i.e. we have
\begin{equation}\label{5.11}
m_j^{(\zeta q, \zeta r)}=m_j^{(u,v)}=m_j^{(p,s)},\qquad j=1,\dots,N.
\end{equation}
We use $m_j$ to denote the common value appearing in \eqref{5.11}. Similarly, from \eqref{3.11} it follows that the multiplicity of $\bar{\zeta}_j^{(\zeta q, \zeta r)}$ and the multiplicity of $\bar{\lambda}_j$ are equal, i.e. we have
\begin{equation}\label{5.12}
\bar{m}_j^{(\zeta q, \zeta r)}=\bar{m}_j^{(u,v)}=\bar{m}_j^{(p,s)},\qquad j=1,\dots,\bar{N}.
\end{equation}
We use $\bar{m}_j$ to denote the common value appearing in \eqref{5.12}.

We can solve the inverse scattering problem for \eqref{1.1} by using the following steps:
\begin{enumerate}

	\item[\text{\rm(a)}] From $T^{(\zeta q, \zeta r)}$ appearing in the data set \eqref{5.1} we can recover the value of $e^{i\mu/2}$ via \eqref{5.5}.

	\item[\text{\rm(b)}] Using the already recovered value of $e^{i\mu/2},$ from \eqref{5.1} we obtain the auxiliary data subset
	\begin{equation}\label{5.13}
	\Big\{R^{(u,v)}, \bar{R}^{(u,v)}, R^{(p,s)}, \bar{R}^{(p,s)}\Big\},
	\end{equation} 	
	where we have
	\begin{equation}\label{5.14}
R^{(u,v)}=\sqrt{\lambda}\,e^{i\mu}\,R^{(\zeta q,\zeta r)},\quad \bar{R}^{(u,v)}=\frac{1}{\sqrt{\lambda}}\,e^{-i\mu}\,\bar{R}^{(\zeta q,\zeta r)},
	\end{equation}
	\begin{equation}\label{5.15}
R^{(p,s)}=\frac{1}{\sqrt{\lambda}}\,e^{i\mu}\,R^{(\zeta q,\zeta r)},\quad \bar{R}^{(p,s)}=\sqrt{\lambda}\,e^{-i\mu}\,\bar{R}^{(\zeta q,\zeta r)},
	\end{equation}
	which are obtained from \eqref{3.12} and \eqref{3.13}.

	\item[\text{\rm(c)}] Using \eqref{5.9}-\eqref{5.12} we obtain from \eqref{5.2} the auxiliary bound-state information
	\begin{equation}\label{5.16}
	\Big\{{\lambda}_j,  \big\{c^{(u,v)}_{jk}, c^{(p,s)}_{jk}\big\}_{k=1}^{m_j}\Big\}_{j=1}^N, \quad \Big\{\bar{\lambda}_j,  \big\{\bar{c}^{(u,v)}_{jk}, \bar{c}^{(p,s)}_{jk} \big\}_{k=1}^{\bar{m}_j}\Big\}_{j=1}^{\bar{N}},
	\end{equation}
where we remark that the norming constants appearing in \eqref{5.16} can be obtained from the norming constants appearing in \eqref{5.2} with the help of $(4.29)$ and $(4.49)$ of \cite{busse2008generalized} by using the relations among the Jost solutions given in \eqref{3.1}-\eqref{3.4}.

	\item[\text{\rm(d)}] Since the individual norming constants appearing in \eqref{5.16} can be combined into the row vectors $C^{(u,v)},$ $\bar{C}^{(u,v)},$ $C^{(p,s)},$ $\bar{C}^{(p,s)}$ as in \eqref{2.15a}, \eqref{2.16a}, \eqref{C4},  we can conclude that the input data set consisting of \eqref{5.1} and \eqref{5.2} yields the auxiliary input data set consisting of
	\begin{equation}\label{5.17}
	\Big\{R^{(u,v)}(\lambda), \bar{R}^{(u,v)}(\lambda), R^{(p,s)}(\lambda), \bar{R}^{(p,s)}(\lambda)\Big\},
	\end{equation}
	\begin{equation}\label{5.18}
	\Big\{\big\{\lambda_j\big\}_{j=1}^N, \big\{\bar{\lambda}_j\big\}_{j=1}^{\bar{N}}, C^{(u,v)}, \bar{C}^{(u,v)},  C^{(p,s)}, \bar{C}^{(p,s)} \Big\}.
	\end{equation}
	In fact, with the help of \eqref{Aa} and \eqref{A1a} the sets $\big\{\lambda_j\big\}_{j=1}^N$ and $\big\{\bar{\lambda}_j\big\}_{j=1}^{\bar{N}}$ yield the matrices $A$ and $\bar{A}$ appearing in \eqref{A4} and \eqref{A5}, respectively. Knowing the multiplicities $m_j$ and $\bar{m}_j$ appearing in \eqref{5.11} and \eqref{5.12}, with the help of \eqref{2.21a} and \eqref{2.22a} we are also able to construct the column vectors $B$ and $\bar{B}$ appearing in \eqref{B}. Thus, the input data set consisting of \eqref{5.1} and \eqref{5.2} yields the auxiliary input data consisting of the two sets
	\begin{equation}\label{5.19}
	\Big\{ R^{(u,v)}, \bar{R}^{(u,v)}, \big(A, B, C^{(u,v)} \big),  \big(\bar{A},  \bar{B}, \bar{C}^{(u,v)} \big) \Big\},
	\end{equation}
	\begin{equation}\label{5.20}
	\Big\{ R^{(p,s)}, \bar{R}^{(p,s)}, \big(A, B, C^{(p,s)} \big),  \big(\bar{A},  \bar{B}, \bar{C}^{(p,s)} \big) \Big\}.
	\end{equation}
We remark that the matrices $A, B, \bar{A}, \bar{B}$ appearing in \eqref{5.19} coincide with the respective matrices $A, B, \bar{A}, \bar{B}$  appearing in \eqref{5.20}.

\item[\text{\rm(d)}] Having constructed the auxiliary input data set given in \eqref{5.19} with the help of \eqref{2.13} and \eqref{2.14}, we form the matrix-valued function $F^{(u,v)}(y)$ as in \eqref{2.12}. We then use  $F^{(u,v)}(y)$ as input into the Marchenko integral equation \eqref{2.15} and obtain the solution $K^{(u,v)}(x,y)$ to \eqref{2.15}. From \eqref{2.16} we then observe that we have the four scalar entries $\bar{K}_1^{(u,v)}(x,y),$ $\bar{K}_2^{(u,v)}(x,y),$ $K_1^{(u,v)}(x,y),$ $K_2^{(u,v)}(x,y).$ As seen from \eqref{2.17}-\eqref{2.20} we have
	\begin{equation}\label{5.21}
u(x)=-2\,K_1^{(u,v)}(x,x),
\end{equation}
\begin{equation}\label{5.22}
v(x)=-2\,\bar{K}_2^{(u,v)}(x,x),
\end{equation}
\begin{equation}\label{5.23}
\int_{x}^{\infty}dz\,u(z)\,v(z)=2\,\bar{K}_1^{(u,v)}(x,x),
\end{equation}
\begin{equation}\label{5.24}
\int_{x}^{\infty}dz\,u(z)\,v(z)=2\,K_2^{(u,v)}(x,x).
\end{equation}

\item[\text{\rm(e)}] We now repeat the previous step by using the auxiliary input data set given in \eqref{5.20}. In other words, with the help of \eqref{2.13} and \eqref{2.14} we form $F^{(p,s)}(y)$ as in \eqref{2.12}, use it as input in the Marchenko integral equation \eqref{2.15}, and obtain the solution $K^{(p,s)}(x,y)$ to \eqref{2.15}.  As seen from \eqref{2.16} we then have the four scalar entries $\bar{K}_1^{(p,s)}(x,y),$ $\bar{K}_2^{(p,s)}(x,y),$ $K_1^{(p,s)}(x,y),$ $K_2^{(p,s)}(x,y).$ Then, as seen from \eqref{2.17}-\eqref{2.20} we get
\begin{equation}\label{5.25}
p(x)=-2\,K_1^{(p,s)}(x,x),
\end{equation}
\begin{equation}\label{5.26}
s(x)=-2\,\bar{K}_2^{(p,s)}(x,x),
\end{equation}
\begin{equation}\label{5.27}
\int_{x}^{\infty}dz\,p(z)\,s(z)=2\,\bar{K}_1^{(p,s)}(x,x),
\end{equation}
\begin{equation}\label{5.28}
\int_{x}^{\infty}dz\,p(z)\,s(z)=2\,K_2^{(p,s)}(x,x).
\end{equation}

\item[\text{\rm(f)}] Next, we construct the function $E$ appearing in \eqref{E}. This is done as follows. Note that from \eqref{ux} and \eqref{vx} we have
\begin{equation}\label{5.29}
u(x)\,v(x)=-\frac{i}{2}\,r\diff(x)\,q(x)+\frac{1}{4}\,q(x)^2\,r(x)^2,
\end{equation}
and from \eqref{p} and \eqref{s} we have
\begin{equation}\label{5.30}
p(x)\,s(x)=\frac{i}{2}\,q\diff(x)\,r(x)+\frac{1}{4}\,q(x)^2\,r(x)^2.
\end{equation}
From \eqref{5.29} and \eqref{5.30} we get
\begin{equation}\label{5.31}
p(x)\,s(x)-u(x)\,v(x)=\frac{i}{2}\left[q(x)\,r(x)\right]\diff,
\end{equation}
which yields
\begin{equation}\label{5.32}
\int_{x}^{\infty}dz\left( p(z)\,s(z)-u(z)\,v(z)\right) =-\frac{i}{2}\,q(x)\,r(x).
\end{equation}
Comparing \eqref{5.32} with \eqref{5.24} and \eqref{5.28} we see that we can write \eqref{5.32} as
\begin{equation}\label{5.33}
2\left( K_2^{(u,v)}(x,x)-K_2^{(p,s)}(x,x)\right)=\frac{i}{2}\,q(x)\,r(x),
\end{equation}
which yields
\begin{equation}\label{5.34}
2 \int_{x}^{\infty}dz\left( K_2^{(u,v)}(z,z)-K_2^{(p,s)}(z,z)\right)=\frac{i}{2}
\int_{x}^{\infty}dz\,q(z)\,r(z).
\end{equation}
Using \eqref{E} and \eqref{3.5} we can write the right-hand side of \eqref{5.34} as
\begin{equation}\label{5.35}
\frac{i}{2}\int_{x}^{\infty}dz\,q(z)\,r(z)=\frac{i}{2}\,\mu-\frac{i}{2}\int_{-\infty}^{x}dz\,q(z)\,r(z).
\end{equation}
From \eqref{5.34} and \eqref{5.35} we obtain
\begin{equation}\label{5.36}
2\int_{x}^{\infty}dz\left(
K_2^{(u,v)}(z,z)-K_2^{(p,s)}(z,z)\right)=\frac{i}{2}\,\mu-\frac{i}{2}\int_{-\infty}^{x}dz\,q(z)\,r(z).
\end{equation}
Comparing \eqref{5.36} with \eqref{E} we see that $E$ can be constructed via
\begin{equation}\label{5.37}
E=e^{i\mu/2}\,\exp\left(2\int_{x}^{\infty}dz\left(
K_2^{(p,s)}(z,z)-K_2^{(u,v)}(z,z)\right)\right),
\end{equation}
where we recall that $e^{i\mu/2}$ is already constructed via \eqref{5.5} or \eqref{5.6}.

\item[\text{\rm(g)}] Having constructed $u(x),$ $s(x),$ and $E,$ we then use \eqref{ux} and \eqref{s} in order to recover $q(x)$ and $r(x)$ via
\begin{equation}\label{5.38}
q(x)=u(x)\,E^2,
\end{equation}
\begin{equation}\label{5.39}
r(x)=s(x)\,E^{-2}.
\end{equation}
Thus, we have completed the construction of $q(x)$ and $r(x)$ from the input data set consisting of \eqref{5.1} and \eqref{5.2}, which completes the solution to the inverse scattering problem for \eqref{1.1}.
\end{enumerate}

\section{The alternate Marchenko method}
\label{sec:section6}

As indicated in Section~\ref{sec:section2}, the solution to the inverse scattering problems for \eqref{1.2} and \eqref{1.2b} can be obtained from the solution to the system of linear  integral equations given in \eqref{2.15}, which is known as the Marchenko system. In Section~\ref{sec:section5} we have seen that the solution to the inverse scattering problem for \eqref{1.1} can be obtained with the help of the solution to the Marchenko systems associated with \eqref{1.2} and \eqref{1.2b}, respectively. In this section we present another method to solve the inverse problem for \eqref{1.1} by formulating a system of linear integral equation directly associated with \eqref{1.1}. We call our method the alternate Marchenko method, and we call the resulting system of linear integral equations the alternate Marchenko system, due to its similarity to the standard Marchenko system of integral equations.

The motivation behind our alternate Marchenko method comes from the analysis by Tsuchida \cite{tsuchida2010new}, who presented a system of linear integral equations resembling our alternate Marchenko system. There are also some differences between our alternate Marchenko system and that of Tsuchida's; for example, Tsuchida's system lacks the symmetry between the $x$ and $y$ variables that exists in the standard Marchenko system. Tsuchida's main interest in developing his method was to solve initial-value problems for
certain integrable evolution equations, in particular the derivative NLS equation and
related equations. We have found Tsuchida's formulation not very intuitive and not easy to comprehend because it is not clear how the scattering theory is used in the derivation and how Tsuchida's gauge transformations are implemented. Nevertheless, we would like to emphasize the importance of Tsuchida's contribution. Since our main concentration is on the analysis of the direct and inverse scattering problems for \eqref{1.1}, our work presented here may help to clarify the idea behind Tsuchida's formulation, by providing a clearer method to derive the alternate Marchenko system. For further details and elaborations on our method, we refer the reader to \cite{Ercan2018}.

In order to derive the alternate Marchenko system, we first present some auxiliary results.

\begin{Proposition}
\label{prop:proposition6.1}
Assume that the potentials $q(x)$ and $r(x)$ appearing in the first-order system \eqref{1.1}   belong to the Schwartz class. Let $u(x),$ $v(x),$ $p(x),$ $s(x)$ be the potentials defined as in \eqref{ux}, \eqref{vx}, \eqref{p}, \eqref{s}, respectively. Then, the zero-energy Jost solutions $\psi^{(u,v)}(0,x)$ and  $\bar{\psi}^{(u,v)}(0,x)$ to \eqref{1.2} and the zero-energy Jost solutions $\psi^{(p,s)}(0,x)$ and  $\bar{\psi}^{(p,s)}(0,x)$  to \eqref{1.2b} are given by
\begin{equation}\label{6.1}
\begin{bmatrix}
\psi_1^{(u,v)}(0,x)\\
\noalign{\medskip}\psi_2^{(u,v)}(0,x)
\end{bmatrix}:=\psi^{(u,v)}(0,x)=\begin{bmatrix}
-e^{-i\mu/2}\,E^{-1} \displaystyle\int_{x}^{\infty}dy\,q(y)\\
\noalign{\medskip}
e^{-i\mu/2}\,E
\left( 1+\displaystyle \frac{i}{2}\,r(x)\displaystyle\int_{x}^{\infty}dy\,q(y)\right)
\end{bmatrix},
\end{equation}
\begin{equation}\label{6.2}
\begin{bmatrix}
\bar{\psi}_1^{(u,v)}(0,x)\\\noalign{\medskip}\bar{\psi}_2^{(u,v)}(0,x)
\end{bmatrix}:=\bar{\psi}^{(u,v)}(0,x)=\begin{bmatrix}
e^{i\mu/2}\,E^{-1}\\
\noalign{\medskip}-\displaystyle\frac{i}{2}\,e^{i\mu/2}\,r(x)\,E
\end{bmatrix},
\end{equation}
\begin{equation}\label{6.3}
\begin{bmatrix}
\psi_1^{(p,s)}(0,x)\\
\noalign{\medskip}\psi_2^{(p,s)}(0,x)
\end{bmatrix}:=\psi^{(p,s)}(0,x)=\begin{bmatrix}
\displaystyle\frac{i}{2}\,e^{-i\mu/2}\,q(x)\,E^{-1}\\
\noalign{\medskip}e^{-i\mu/2}\,E
\end{bmatrix},
\end{equation}
\begin{equation}\label{6.4}
\begin{bmatrix}
\bar{\psi}_1^{(p, s)}(0,x)\\\noalign{\medskip}\bar{\psi}_2^{(p,s)}(0,x)
\end{bmatrix}:=\bar{\psi}^{(p,s)}(0,x)=\begin{bmatrix}
e^{i\mu/2}\,E^{-1}\left(1-\displaystyle\frac{i}{2}\,q(x)\displaystyle\int_{x}^{\infty}dy\,r(y)\right)\\
\noalign{\medskip}-e^{i\mu/2}\,E\displaystyle\int_{x}^{\infty}dy\,r(y)
\end{bmatrix},
\end{equation}
where $E$ and $\mu$ are the quantities defined in \eqref{E} and \eqref{3.5}, respectively.
\end{Proposition}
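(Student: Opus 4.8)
The plan is to combine one structural observation about \eqref{1.1} with the transformation identities of Theorems~\ref{thm:theorem3.1} and \ref{thm:theorem3.2}. The observation is that at $\zeta=0$ the coefficient matrix in \eqref{1.1} vanishes identically, so every solution of \eqref{1.1} at $\zeta=0$ is constant in $x$. Hence the Jost solutions $\psi^{(\zeta q,\zeta r)}$ and $\bar\psi^{(\zeta q,\zeta r)}$, which are determined by their $x\to+\infty$ asymptotics \eqref{2.1} and \eqref{2.3}, reduce at $\zeta=0$ to the constant vectors
\[
\psi^{(\zeta q,\zeta r)}(0,x)=\begin{bmatrix}0\\1\end{bmatrix},\qquad \bar\psi^{(\zeta q,\zeta r)}(0,x)=\begin{bmatrix}1\\0\end{bmatrix}.
\]
First I would record these, using the existence and continuity of the Jost solutions of \eqref{1.1} in $\zeta$ down to $\zeta=0$ (Section~\ref{sec:section4}) valid for Schwartz $q$ and $r$.

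Next I would invert the transformation matrices in \eqref{3.1}, \eqref{3.3}, \eqref{3.6}, \eqref{3.8} to express $\psi^{(u,v)},\bar\psi^{(u,v)},\psi^{(p,s)},\bar\psi^{(p,s)}$ in terms of $\psi^{(\zeta q,\zeta r)}$ and $\bar\psi^{(\zeta q,\zeta r)}$, and then set $\zeta=0$. Two of the four cases are immediate: in the inverse transformations producing $\bar\psi^{(u,v)}$ and $\psi^{(p,s)}$, the singular factor $1/\sqrt\lambda$ gets multiplied by a component that already vanishes at $\zeta=0$, so the limit $\zeta\to0$ is clean. Feeding in the constant vectors above, together with $E\to e^{i\mu/2}$ from \eqref{E} and \eqref{3.5}, then yields \eqref{6.2} and \eqref{6.3} with no further work.

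The \emph{main obstacle} is the remaining two cases, $\psi^{(u,v)}$ and $\bar\psi^{(p,s)}$, where a factor $1/\sqrt\lambda$ multiplies a component of the energy-dependent Jost solution that vanishes at $\zeta=0$, so the naive substitution gives an indeterminate $0/0$ and one must keep the first-order term in $\zeta$. To obtain it I would expand the Jost solution of \eqref{1.1} in powers of $\zeta$ about $\zeta=0$, writing $\psi^{(\zeta q,\zeta r)}=\psi^{(0)}+\zeta\,\psi^{(1)}+O(\zeta^2)$ and similarly for $\bar\psi^{(\zeta q,\zeta r)}$; inserting this into \eqref{1.1} and matching the $O(\zeta)$ terms under the $x\to+\infty$ normalization from \eqref{2.1} and \eqref{2.3} gives the first-order coefficient by a single quadrature, namely the upper component $-\int_x^\infty dy\,q(y)$ for $\psi$ and the lower component $-\int_x^\infty dy\,r(y)$ for $\bar\psi$. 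These are exactly the integrals appearing in \eqref{6.1} and \eqref{6.4}; carrying them through the inverse transformations, with the potentials as in \eqref{ux}, \eqref{vx}, \eqref{p}, \eqref{s}, produces the two remaining formulas.

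Finally, to make the argument airtight while sidestepping the delicacy of the $\zeta\to0$ limit, I would verify each candidate directly. At $\lambda=0$ the systems \eqref{1.2} and \eqref{1.2b} decouple into $\xi'=u\,\eta,\ \eta'=v\,\xi$ and $\gamma'=p\,\epsilon,\ \epsilon'=s\,\gamma$; substituting \eqref{6.1}--\eqref{6.4}, using $E'=\tfrac{i}{2}\,q\,r\,E$ together with \eqref{ux}--\eqref{s}, I would check that the two scalar equations hold in each case and that the $x\to+\infty$ limits reproduce \eqref{2.1} and \eqref{2.3} at $\lambda=0$ (noting $E\to e^{i\mu/2}$). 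By the uniqueness of the Jost solutions asserted in Theorem~\ref{thm:theorem2.1}(a), satisfying the zero-energy system together with the correct $+\infty$ asymptotics identifies each candidate as the stated Jost solution. The only genuine bookkeeping here is the cancellation of the $E'$ and $\tfrac{i}{2}\,q\,r$ cross terms, which I expect to be routine.
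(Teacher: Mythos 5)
Your proposal is correct, and its closing paragraph is, in fact, the paper's entire proof: the paper simply asserts that one can directly verify that \eqref{6.1}--\eqref{6.4} satisfy \eqref{1.2} and \eqref{1.2b} at $\lambda=0$ together with the asymptotics \eqref{2.1} and \eqref{2.3}, or alternatively solve the zero-energy systems directly by quadrature (the latter being close in spirit to your $O(\zeta)$ expansion). What you add is a genuinely different derivation: since the coefficient matrix of \eqref{1.1} vanishes at $\zeta=0$, the Jost solutions there are the constants $\begin{bmatrix}0\\1\end{bmatrix}$ and $\begin{bmatrix}1\\0\end{bmatrix}$, and pushing these through the inverses of the gauge transformations \eqref{3.1}, \eqref{3.3}, \eqref{3.6}, \eqref{3.8} produces the four formulas, with the two indeterminate cases resolved by the first-order coefficients $-\int_x^\infty dy\,q(y)$ and $-\int_x^\infty dy\,r(y)$, exactly as you say; I checked these computations and they reproduce \eqref{6.1}--\eqref{6.4}. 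Your route buys motivation (it explains where the integrals and the factors $e^{\pm i\mu/2}E^{\pm1}$ come from), at the cost of a limit-interchange subtlety: the $O(\zeta)$ normalization is imposed at $x\to+\infty$ while the expansion is at $\zeta\to0$, so a uniformity remark (fine for Schwartz potentials) is needed -- a delicacy your concluding direct verification sidesteps, which is presumably why the paper verifies rather than derives.

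Two concrete points if you carry the plan out. First, running the inverse transformations with \eqref{3.6}--\eqref{3.9} exactly as printed yields $\tfrac{i}{2}\,e^{-i\mu/2}\,q(x)\,E$ for the first component in \eqref{6.3}, not the stated $E^{-1}$; checking compatibility of \eqref{3.6} with \eqref{1.1} and \eqref{1.2b} (the $\beta'$ equation forces it) shows the $(1,2)$ entries of the matrices in \eqref{3.6}--\eqref{3.9} should carry $E^{-1}$ rather than $E$, e.g.\ $-\tfrac{i}{2\sqrt{\lambda}}\,q(x)\,E^{-1}$ in \eqref{3.6}. Your final substitution step, using $E'=\tfrac{i}{2}\,q\,r\,E$ and \eqref{ux}--\eqref{s}, confirms \eqref{6.3} and \eqref{6.4} as stated, so the discrepancy is a typo in Theorem~\ref{thm:theorem3.2}, not in the Proposition -- your verification correctly settles this, but be aware the intermediate derivation will expose it. Second, two minor inaccuracies in your description, neither affecting the argument: at $\lambda=0$ the systems do not decouple (the coupling $\xi'=u\,\eta$, $\eta'=v\,\xi$ remains; only the diagonal terms vanish), though substitution needs no decoupling; and in the clean cases \eqref{6.2}, \eqref{6.3} the inverse matrices actually contain a factor $\sqrt{\lambda}$ (vanishing at $\zeta=0$) against the nonzero component, rather than a singular $1/\sqrt{\lambda}$ against a vanishing one -- even better than you claim.
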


\begin{proof}
One can directly verify that the quantities given in \eqref{6.1} and
\eqref{6.2} with the respective asymptotics in \eqref{2.1} and \eqref{2.3}
with $\lambda=0$
satisfy \eqref{1.2} with $\lambda=0.$ Similarly, one can directly verify
that the quantities given in \eqref{6.3} and
\eqref{6.4} with the respective asymptotics in \eqref{2.1} and \eqref{2.3}
with $\lambda=0$
satisfy \eqref{1.2b} with $\lambda=0.$
Alternatively, one can directly solve \eqref{1.2} with $\lambda=0$
using the respective asymptotics \eqref{2.1} and \eqref{2.3}
with $\lambda=0$ and directly obtain the zero-energy Jost solutions
given in \eqref{6.1} and \eqref{6.2}. In a similar way,
one can directly construct the zero-energy Jost solutions
to \eqref{1.2b} with $\lambda=0$ and obtain
\eqref{6.3} and \eqref{6.4}.
\end{proof}

We remark that from \eqref{6.1}-\eqref{6.4} we obtain
\begin{equation}\label{6.5}
q(x)=e^{i\mu}\,\frac{d}{dx}\left[\displaystyle\frac{\psi_1^{(u,v)}(0,x)}{\bar{\psi}_1^{(u,v)}(0,x)}\right],
\end{equation}
\begin{equation}\label{6.6}
r(x)=e^{-i\mu}\,\frac{d}{dx}\left[\displaystyle\frac{\bar{ \psi}_2^{(p,s)}(0,x)}{\psi_2^{(p,s)}(0,x)}\right],
\end{equation}
where we recall that the subscripts 1 and 2 in \eqref{6.1}-\eqref{6.6} refer to the first and second components in the respective Jost solutions.

It is already known \cite{ablowitz149inverse,ablowitz1974inverse,ablowitz1981solitons} that the entries appearing in \eqref{2.16} are related to the entries of the corresponding Jost solutions as
 \begin{equation}\label{6.7}
\psi^{u,v)}(\lambda,x)=\begin{bmatrix}
\psi_1^{(u,v)}(\lambda,x)\\\noalign{\medskip}\psi_2^{(u,v)}(\lambda,x)
\end{bmatrix}=\begin{bmatrix}
0\\\noalign{\medskip}e^{i\lambda x}
\end{bmatrix}+\int_{x}^{\infty}dy\begin{bmatrix}
K_1^{(u,v)}(x,y)\\\noalign{\medskip}K_2^{(u,v)}(x,y)
\end{bmatrix}e^{i\lambda y},
\end{equation}
\begin{equation}\label{6.8}
\bar{\psi}^{(u,v)}(\lambda,x)=\begin{bmatrix}
\bar{\psi}_1^{(u,v)}(\lambda,x)\\\noalign{\medskip}\bar{\psi}_2^{(u,v)}(\lambda,x)
\end{bmatrix}=\begin{bmatrix}
e^{-i\lambda x}\\\noalign{\medskip}0
\end{bmatrix}+\int_{x}^{\infty}dy\begin{bmatrix}
\bar{K}_1^{(u,v)}(x,y)\\\noalign{\medskip}\bar{K}_2^{(u,v)}(x,y)
\end{bmatrix}e^{-i\lambda y},
\end{equation}
 \begin{equation}\label{6.9}
\psi^{p,s)}(\lambda,x)=\begin{bmatrix}
\psi_1^{(p,s)}(\lambda,x)\\\noalign{\medskip}\psi_2^{(p,s)}(\lambda,x)
\end{bmatrix}=\begin{bmatrix}
0\\\noalign{\medskip}e^{i\lambda x}
\end{bmatrix}+\int_{x}^{\infty}dy\begin{bmatrix}
K_1^{(p,s)}(x,y)\\\noalign{\medskip}K_2^{(p,s)}(x,y)
\end{bmatrix}e^{i\lambda y},
\end{equation}
\begin{equation}\label{6.10}
\bar{\psi}^{(p,s)}(\lambda,x)=\begin{bmatrix}
\bar{\psi}_1^{(p,s)}(\lambda,x)\\\noalign{\medskip}\bar{\psi}_2^{(p,s)}(\lambda,x)
\end{bmatrix}=\begin{bmatrix}
e^{-i\lambda x}\\\noalign{\medskip}0
\end{bmatrix}+\int_{x}^{\infty}dy\begin{bmatrix}
\bar{K}_1^{(p,s)}(x,y)\\\noalign{\medskip}\bar{K}_2^{(p,s)}(x,y)
\end{bmatrix}e^{-i\lambda y}.
\end{equation}
In the next proposition we relate the potentials $q(x)$ and $r(x)$ to the quantities appearing in the integrands in \eqref{6.7}-\eqref{6.10}.

\begin{Proposition}
\label{prop:proposition6.2}
Assume that the potentials $q(x)$ and $r(x)$ appearing in the first-order system \eqref{1.1}   belong to the Schwartz class. Let $u(x),$ $v(x),$ $p(x),$ $s(x)$ be the potentials defined as in \eqref{ux}, \eqref{vx}, \eqref{p}, \eqref{s}, respectively. Then, the integrands appearing in \eqref{6.7}-\eqref{6.10} are related to the components of the respective zero-energy Jost solutions to \eqref{1.2} and \eqref{1.2b} as
\begin{equation}\label{6.11}
\psi_1^{(u,v)}(0,x)=\int_{x}^{\infty}dy\,K_1^{(u,v)}(x,y),
\end{equation}
\begin{equation}\label{6.12}
\bar{\psi}_1^{(u,v)}(0,x)=1+\int_{x}^{\infty}dy\,\bar{K}_1^{(u,v)}(x,y),
\end{equation}
\begin{equation}\label{6.13}
\psi_2^{(p,s)}(0,x)=1+\int_{x}^{\infty}dy\,K_2^{(p,s)}(x,y),
\end{equation}
\begin{equation}\label{6.14}
\bar{\psi}_2^{(p,s)}(0,x)=\int_{x}^{\infty}dy\,\bar{K}_2^{(p,s)}(x,y).
\end{equation}
\end{Proposition}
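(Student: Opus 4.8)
The plan is to read off each of the four identities \eqref{6.11}--\eqref{6.14} directly from the integral representations \eqref{6.7}--\eqref{6.10} by evaluating them at $\lambda=0.$ First I would note that, by Theorems~\ref{thm:theorem3.1} and \ref{thm:theorem3.2}, the potentials $u,$ $v,$ $p,$ $s$ belong to the Schwartz class, so the triangular kernels $K_1^{(u,v)},$ $\bar{K}_1^{(u,v)},$ $K_2^{(p,s)},$ $\bar{K}_2^{(p,s)}$ appearing in \eqref{6.7}--\eqref{6.10} are continuous and rapidly decaying in $y$ on $[x,\infty),$ hence integrable there. Consequently the representations \eqref{6.7}--\eqref{6.10}, valid for $\lambda$ in the appropriate closed half-plane, extend continuously to $\lambda=0,$ and the exponential factors $e^{\pm i\lambda x}$ and $e^{\pm i\lambda y}$ may be replaced by $1$ term by term.

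Then I would carry out the four evaluations. Extracting the first component of \eqref{6.7} at $\lambda=0$ kills the free term, whose top entry is zero, and collapses the integrand to $K_1^{(u,v)}(x,y),$ which gives \eqref{6.11}. Extracting the first component of \eqref{6.8} at $\lambda=0$ leaves the free-term contribution $e^{-i\lambda x}|_{\lambda=0}=1$ together with the integral of $\bar{K}_1^{(u,v)}(x,y),$ which gives \eqref{6.12}. The same procedure applied to the second component of \eqref{6.9} produces the constant $1$ from the free term $e^{i\lambda x}$ plus the integral of $K_2^{(p,s)}(x,y),$ yielding \eqref{6.13}; applied to the second component of \eqref{6.10} it discards the free term, whose bottom entry is zero, and leaves the integral of $\bar{K}_2^{(p,s)}(x,y),$ yielding \eqref{6.14}.

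I do not expect a genuinely hard step here, since each identity is an immediate specialization of a known representation; the only point meriting care is the legitimacy of evaluating \eqref{6.7}--\eqref{6.10} at $\lambda=0,$ which rests on the continuity of the kernels established for Schwartz-class potentials. As an independent consistency check one could insert the explicit zero-energy Jost solutions from Proposition~\ref{prop:proposition6.1} into the left-hand sides of \eqref{6.11}--\eqref{6.14} and confirm that they agree with the corresponding kernel integrals, although this verification is not required for the argument.
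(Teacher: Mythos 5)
Your proposal is correct and matches the paper's own proof, which consists precisely of setting $\lambda=0$ in the representations \eqref{6.7}--\eqref{6.10} to read off \eqref{6.11}--\eqref{6.14}. The only difference is that you spell out the routine justification (integrability of the kernels for Schwartz-class $u,$ $v,$ $p,$ $s$ and continuity at $\lambda=0$) that the paper leaves implicit.
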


\begin{proof}
	By letting $\lambda=0$ in \eqref{6.7}-\eqref{6.10} we directly obtain \eqref{6.11}-\eqref{6.14}.
\end{proof}

Next we present the derivation of our alternate Marchenko system. For the motivation and further details we refer the reader to \cite{Ercan2018}. Inspired by the right-hand sides of \eqref{6.5} and \eqref{6.6} we define the scalar quantities $\mathscr{K}(x,y)$ and $\bar{\mathscr{K}}(x,y)$ as
\begin{equation}\label{6.15}
\mathscr{K}(x,y):=\displaystyle\frac{\displaystyle\int_{y}^{\infty}dz\,K_1^{(u,v)}(x,z)}{1+\displaystyle\int_{x}^{\infty}dz\,\bar{K}_1^{(u,v)}(x,z)},\qquad x<y,
\end{equation}
\begin{equation}\label{6.16}
\bar{\mathscr{K}}(x,y):=\displaystyle\frac{\displaystyle\int_{y}^{\infty}dz\,\bar{K}_2^{(p,s)}(x,z)}{1+\displaystyle\int_{x}^{\infty}dz\,K_2^{(p,s)}(x,z)},\qquad x<y,
\end{equation}
with the understanding that
\begin{equation}\label{6.17}
\mathscr{K}(x,y)=0,\quad \bar{\mathscr{K}}(x,y)=0,\qquad x>y.
\end{equation}
We remark that the $y$-dependence of $\mathscr{K}(x,y)$ and $\mathscr{\bar{K}}(x,y)$ occurs only in the numerators in \eqref{6.15} and \eqref{6.16}. With the understanding that $\mathscr{K}(x,x)$ and $\bar{\mathscr{K}}(x,x)$ denote  $\mathscr{K}(x,x^+)$ and $\bar{\mathscr{K}}(x,x^+)$ respectively, with the help of \eqref{6.11}-\eqref{6.16} we obtain
\begin{equation}\label{6.18}
\mathscr{K}(x,x)=\displaystyle\frac{\displaystyle\int_{x}^{\infty}dz\,K_1^{(u,v)}(x,z)}
{1+\displaystyle\int_{x}^{\infty}dz\,\bar{K}_1^{(u,v)}(x,z)}=\displaystyle
\frac{\psi_1^{(u,v)}(0,x)}{\bar{\psi}_1^{(u,v)}(0,x)},
\end{equation}
\begin{equation}\label{6.19}
\bar{\mathscr{K}}(x,x)=\displaystyle\frac{\displaystyle\int_{x}^{\infty}dz\,
\bar{K}_2^{(p,s)}(x,z)}{1+\displaystyle\int_{x}^{\infty}dz\,K_2^{(p,s)}(x,z)}=
\displaystyle\frac{\bar{\psi}_2^{(p,s)}(0,x)}{\psi_2^{(p,s)}(0,x)}.
\end{equation}
Comparing \eqref{6.5}, \eqref{6.6}, \eqref{6.18}, and \eqref{6.19} we observe that
\begin{equation}\label{6.20}
q(x)=e^{i\mu}\,\frac{d\mathscr{K}(x,x)}{dx},
\end{equation}
\begin{equation}\label{6.21}
r(x)=e^{-i\mu}\,\frac{d\bar{\mathscr{K}}(x,x)}{dx}.
\end{equation}

We would like to show that the scalar quantities $\mathscr{K}(x,y)$ and $\bar{\mathscr{K}}(x,y)$ defined in \eqref{6.15}-\eqref{6.17} satisfy the alternate Marchenko integral system
	\begin{equation}\label{6.22}
\begin{split}
\mathscr{K}(x,y)+&\bar{G}^{(u,v)}(x+y)
\\ &
+\int_{x}^{\infty}dz\int_{x}^{\infty}dt\, \mathscr{K}_t(x,t)\,G^{(u,v)}(t+z)\,\bar{G}_z^{(u,v)}(z+y)=0,
\qquad x<y,
\end{split}
\end{equation}
\begin{equation}\label{6.23}
\begin{split}
\bar{\mathscr{K}}(x,y)+
&G^{(p,s)}(x+y)
\\ &
+\int_{x}^{\infty}dz\int_{x}^{\infty}dt\, \bar{\mathscr{K}}_t(x,t)\,\bar{G}^{(p,s)}(t+z)\,G_z^{(p,s)}(z+y)=0,
\qquad x<y,
\end{split}
\end{equation}
where the subscripts denote the respective partial derivatives and we have defined
\begin{equation}\label{6.24}
G^{(u,v)}(y):=\int_{y}^{\infty}dz\,\Omega^{(u,v)}(z),\quad \bar{G}^{(u,v)}(y):=\int_{y}^{\infty}dz\,\bar{\Omega}^{(u,v)}(z),
\end{equation}
\begin{equation}\label{6.25}
G^{(p,s)}(y):=\int_{y}^{\infty}dz\,\Omega^{(p,s)}(z),\quad \bar{G}^{(u,v)}(y):=\int_{y}^{\infty}dz\,\bar{\Omega}^{(p,s)}(z),
\end{equation}
with the scalar functions $\Omega^{(u,v)}(z),$ $\bar{\Omega}^{(u,v)}(z),$ $\Omega^{(p,s)}(z),$ $\bar{\Omega}^{(p,s)}(z)$ defined as in \eqref{2.13} and \eqref{2.14}
for the potentials pairs $(u,v)$ and $(p,s),$ respectively.
Note that using the input data sets given in \eqref{5.19} and \eqref{5.20} for the potentials pairs $(u,v)$ and $(p,s),$ respectively, from \eqref{2.13} and \eqref{2.14} it follows that
\begin{equation}\label{6.26}
\Omega^{(u,v)}(y):=\frac{1}{2\pi}\int_{-\infty}^{\infty}d\lambda\, R^{(u,v)}(\lambda)\,e^{i\lambda y}+C^{(u,v)}\,e^{-A y}\,B,
\end{equation}
\begin{equation}\label{6.27}
\bar{\Omega}^{(u,v)}(y):=\frac{1}{2\pi}\int_{-\infty}^{\infty}d\lambda\,
\bar{R}^{(u,v)}(\lambda)\,e^{-i\lambda y}+\bar{C}^{(u,v)}\,e^{-\bar{A} y}\,\bar{B},
\end{equation}
\begin{equation}\label{6.28}
\Omega^{(p,s)}(y):=\frac{1}{2\pi}\int_{-\infty}^{\infty}d\lambda\, R^{(p,s)}(\lambda)\,e^{i\lambda y}+C^{(p,s)}\,e^{-A y}\,B,
\end{equation}
\begin{equation}\label{6.29}
\bar{\Omega}^{(p,s)}(y):=\frac{1}{2\pi}\int_{-\infty}^{\infty}d\lambda\,
\bar{R}^{(p,s)}(\lambda)\,e^{-i\lambda y}+\bar{C}^{(p,s)}\,e^{-\bar{A} y}\,\bar{B}.
\end{equation}

In the next theorem we outline the derivation of \eqref{6.22} and \eqref{6.23}.

\begin{theorem}
\label{thm:theorem6.3}
Assume that the potentials $q(x)$ and $r(x)$ appearing in the first-order system \eqref{1.1}   belong to the Schwartz class.  Let $u(x),$ $v(x),$ $p(x),$ $s(x)$ be the potentials defined as in \eqref{ux}, \eqref{vx}, \eqref{p}, \eqref{s}, respectively. Let  $\mathscr{K}(x,y)$ and $\bar{\mathscr{K}}(x,y)$ be scalar quantities defined as in \eqref{6.15}-\eqref{6.17} and let $G^{(u,v)}(y),$ $\bar{G}^{(u,v)}(y),$  $G^{(p,s)}(y),$ $\bar{G}^{(p,s)}(y)$ be the quantities defined as in \eqref{6.24}-\eqref{6.29}. Then, $\mathscr{K}(x,y)$ and $\bar{\mathscr{K}}(x,y)$ satisfy the alternate Marchenko system given in \eqref{6.22} and \eqref{6.23}.
\end{theorem}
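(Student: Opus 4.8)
The plan is to derive the alternate system \eqref{6.22}--\eqref{6.23} directly from the standard matrix Marchenko system \eqref{2.15}, read off component by component, together with the definitions \eqref{6.15}--\eqref{6.17} and the zero-energy relations \eqref{6.11}--\eqref{6.14}. I will carry out \eqref{6.22} in full, since \eqref{6.23} then follows by the identical argument applied to the $(p,s)$ system with the barred and unbarred quantities interchanged.

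First I would expand \eqref{2.15} for the pair $(u,v)$ using the block forms \eqref{2.12} and \eqref{2.16}. Its $(1,1)$ and $(1,2)$ entries give the scalar relations
\begin{equation}
\bar{K}_1^{(u,v)}(x,y)+\int_x^\infty dz\,K_1^{(u,v)}(x,z)\,\Omega^{(u,v)}(z+y)=0,
\end{equation}
\begin{equation}
K_1^{(u,v)}(x,y)+\bar{\Omega}^{(u,v)}(x+y)+\int_x^\infty dz\,\bar{K}_1^{(u,v)}(x,z)\,\bar{\Omega}^{(u,v)}(z+y)=0,
\end{equation}
which I denote by (A) and (B). Integrating (B) over its free variable on $[y,\infty)$ and converting the $\bar{\Omega}^{(u,v)}$-integrals into $\bar{G}^{(u,v)}$ through \eqref{6.24} yields an equation for the numerator $\int_y^\infty K_1^{(u,v)}(x,z)\,dz$ of $\mathscr{K}(x,y)$. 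Dividing by the denominator $1+\int_x^\infty\bar{K}_1^{(u,v)}(x,z)\,dz$, which equals $\bar{\psi}_1^{(u,v)}(0,x)$ by \eqref{6.12}, places $\mathscr{K}(x,y)$ in the leading slot of an equation whose remaining terms I must now identify with those of \eqref{6.22}.

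The heart of the argument is to match the double-integral term in \eqref{6.22}. Differentiating \eqref{6.15} gives $\mathscr{K}_t(x,t)=-K_1^{(u,v)}(x,t)/\bar{\psi}_1^{(u,v)}(0,x)$, so after multiplying \eqref{6.22} through by the denominator the claim becomes a kernel-level identity containing no explicit $\mathscr{K}$. Because the integrated form of (B) already holds, it suffices to show that the difference between the kernel-level identity and the integrated (B) vanishes. The key manipulation is an integration by parts in $z$ inside the double-integral term, using $G_z^{(u,v)}=-\Omega^{(u,v)}$ and $\bar{G}_z^{(u,v)}=-\bar{\Omega}^{(u,v)}$: the $z=\infty$ boundary term vanishes by the decay of $G^{(u,v)}$ and $\bar{G}^{(u,v)}$, applying (A) to the resulting single $z$-integral cancels the $\bar{K}_1^{(u,v)}$--$\bar{G}^{(u,v)}$ term, and after factoring out the common $\bar{G}^{(u,v)}(x+y)$ the whole difference collapses to
\begin{equation}
\int_x^\infty dz\,\bar{K}_1^{(u,v)}(x,z)+\int_x^\infty dt\,K_1^{(u,v)}(x,t)\,G^{(u,v)}(t+x)=0,
\end{equation}
which is in turn verified by integrating (A) over its free variable on $[x,\infty)$ and interchanging the order of integration via $\int_x^\infty\Omega^{(u,v)}(t+w)\,dw=G^{(u,v)}(t+x)$.

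I expect the main obstacle to be the bookkeeping in this integration-by-parts step: arranging the double integral so that precisely the $z=x$ boundary term is the one removed by the scalar identity, and confirming that the $z=\infty$ boundary term drops out under the Schwartz-class hypotheses, through which $\Omega^{(u,v)}$, $\bar{\Omega}^{(u,v)}$, and hence $G^{(u,v)}$, $\bar{G}^{(u,v)}$, decay. Once this cancellation is in place, \eqref{6.22} follows, and \eqref{6.23} is obtained verbatim for the $(p,s)$ system: the $(2,2)$ and $(2,1)$ entries of \eqref{2.15} play the roles of (A) and (B), one divides by $\psi_2^{(p,s)}(0,x)$ from \eqref{6.13}, and the zero-energy relation \eqref{6.14} closes the argument.
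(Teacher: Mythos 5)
Your proof is correct and takes essentially the same route as the paper's own (outlined) proof: decompose \eqref{2.15} into the scalar equations \eqref{6.30}--\eqref{6.33}, integrate \eqref{6.31} (resp.\ \eqref{6.32}) over $t\in(y,\infty)$, convert the $\Omega$-terms to $G$-terms via \eqref{6.36}--\eqref{6.37}, integrate by parts, and divide by the denominator of \eqref{6.15} (resp.\ \eqref{6.16}). Your kernel-level verification—using \eqref{6.30} to get $\partial_z$ of the inner integral and its integrated form to kill the $z=x$ boundary term—simply supplies the bookkeeping the paper defers to \cite{Ercan2018}, and it checks out.
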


\begin{proof}
	We provide a brief outline of the proof here and refer the reader to \cite{Ercan2018} for the motivation and further details. With the
 help of \eqref{2.16} we can write the $2\times2$ Marchenko system in \eqref{2.15} as four coupled scalar equations as
	\begin{equation}\label{6.30}
	\bar{K}_1(x,t)+\int_{x}^{\infty}dz\,K_1(x,z)\,\Omega(z+t)=0,\qquad x<t,
	\end{equation}
	\begin{equation}\label{6.31}
K_1(x,t)+\bar{\Omega}(x+t)+\int_{x}^{\infty}dz\,\bar{K}_1(x,z)\,\bar{\Omega}(z+t)=0,
\qquad x<t,
	\end{equation}
		\begin{equation}\label{6.32}
\bar{K}_2(x,t)+\Omega(x+t)+\int_{x}^{\infty}dz\,K_2(x,z)\,\Omega(z+t)=0,\qquad x<t,
	\end{equation}
		\begin{equation}\label{6.33}
K_2(x,t)+\int_{x}^{\infty}dz\,\bar{K}_2(x,z)\,\bar{\Omega}(z+t)=0,\qquad x<t.
	\end{equation}
In order to get an integral equation involving  $\mathscr{K}(x,y),$ from the numerator of the right-hand side of \eqref{6.15}, we see that we need to write \eqref{6.31} for the potential pair $(u,v)$ and integrate the resulting equation over $t\in(y, +\infty).$ This yields
\begin{equation}\label{6.34}
\begin{aligned}
\int_{y}^{\infty}&dt\,K_1^{(u,v)}(x,t)+\int_{y}^{\infty}dt\,
\bar{\Omega}^{(u,v)}(x+t)\\&+\int_{y}^{\infty}dt\int_{x}^{\infty}dz\,\bar{K}_1^{(u,v)}
(x,z)\,\bar{\Omega}^{(u,v)}(z+t)=0,\qquad x<z<y.
\end{aligned}
\end{equation}
Similarly, in order to get an integral equation involving $\bar{\mathscr{K}}(x,y),$ from the numerator of the right-hand side of \eqref{6.16}, we see that we need to write \eqref{6.32} for the potential pair $(p,s)$ and integrate the resulting equation over $t\in(y, +\infty).$ This yields
\begin{equation}\label{6.35}
\begin{aligned}
\int_{y}^{\infty}&dt\,\bar{K}_2^{(p,s)}(x,t)+\int_{y}^{\infty}dt\,
\Omega^{(p,s)}(x+t)\\
&+\int_{y}^{\infty}dt\int_{x}^{\infty}dz\,K_2^{(p,s)}
(x,z)\,\Omega^{(p,s)}(z+t)=0,\qquad x<z<y.
\end{aligned}
\end{equation}
By taking the derivatives of \eqref{6.24} and \eqref{6.25}, we see that we can express $\Omega^{(u,v)},$ $\bar{\Omega}^{(u,v)},$ $\Omega^{(p,s)},$ $\bar{\Omega}^{(p,s)}$ in terms of $G^{(u,v)},$ $\bar{G}^{(u,v)},$ $G^{(p,s)},$ $\bar{G}^{(p,s)}$ as
\begin{equation}\label{6.36}
\Omega^{(u,v)}(y)=-G_y^{(u,v)}(y), \quad \bar{\Omega}^{(u,v)}(y)=-\bar{G}_y^{(u,v)}(y),
\end{equation}
\begin{equation}\label{6.37}
\Omega^{(p,s)}(y)=-G_y^{(p,s)}(y), \quad \bar{\Omega}^{(p,s)}(y)=-\bar{G}_y^{(p,s)}(y),
\end{equation}
where we use a subscript to indicate the corresponding
derivative. From \eqref{6.15} and \eqref{6.16} we get
\begin{equation}\label{6.38}
\mathscr{K}(x,y)=\left[ \int_{y}^{\infty}dt\,K_1^{(u,v)}(x,t)\right]\left[1+\int_{x}^{\infty}dz\,
\bar{K}_1^{(u,v)}(x,z)\right]^{-1},\qquad x<y,
\end{equation}
\begin{equation}\label{6.39}
\bar{\mathscr{K}}(x,y)=\left[ \int_{y}^{\infty}dt\,\bar{K}_2^{(p,s)}(x,t)\right]\left[1+\int_{x}^{\infty}
dz\,K_2^{(p,s)}(x,z)\right]^{-1},\qquad x<y.
\end{equation}
With the help
 of \eqref{6.36} and \eqref{6.38}, we write \eqref{6.34} as an integral equation involving $\mathscr{K}(x,y),$ $G^{(u,v)}(y),$ and $\bar{G}^{(u,v)}(y).$ Applying integration by parts on the resulting integrals and finally dividing the resulting equation by the denominator of the right-hand side of \eqref{6.15}, we obtain the integral equation \eqref{6.22} in the alternate Marchenko system. The derivation of \eqref{6.23} is obtained in a similar manner. First, by using \eqref{6.37} and \eqref{6.39} we write \eqref{6.35} in terms of $\bar{\mathscr{K}}(x,y),$ $G^{(p,s)}(y),$ and $\bar{G}^{(p,s)}(y).$ Applying integration by parts in the resulting integrals and finally dividing the resulting equation by the denominator of the right-hand side of \eqref{6.16} we obtain the integral equation \eqref{6.23} in the alternate Marchenko system.
\end{proof}

\section*{Acknowledgement}
The second author is grateful to Prof. T. Tsuchida for his help.

\end{document}